\newtheorem{thm}{Theorem}
\newtheorem{prop}{Proposition}
\newtheorem{definition}{Definition}
\newtheorem{lem}{Lemma}
\newtheorem{assumption}{Assumption}
\newtheorem{remark}{Remark}
\def\BibTeX{{\rm B\kern-.05em{\sc i\kern-.025em b}\kern-.08em
		T\kern-.1667em\lower.7ex\hbox{E}\kern-.125emX}}
\begin{document}

\title{Passivity-based Attack Identification and Mitigation with Event-triggered Observer Feedback and Switching Controller}

\author{Pushkal Purohit and Anoop Jain \\
\thanks{The authors are with the Department of Electrical Engineering, Indian Institute of Technology Jodhpur, Rajasthan, India 342030 (e-mail: purohit.1@iitj.ac.in, anoopj@iitj.ac.in).}
}
\maketitle


\begin{abstract}
This paper addresses the problem of output consensus in linear passive multi-agent systems under a False Data Injection (FDI) attack, considering the unavailability of complete state information. Our formulation relies on an event-based cryptographic authentication scheme for sensor integrity and considers FDI attacks at the actuator end, inspired by their practical nature and usages. For secure consensus, we propose (i) a passivity-based approach for detecting FDI attacks on the system and (ii) a Zeno-free event-triggered observer-based switching controller, which switches between the \textit{normal} and the \textit{defense} modes following an attack detection. We show that the closed-loop system achieves practical consensus under the controller's action in the defense mode. Simulation examples are provided to support the theoretical findings.
\end{abstract}

\begin{keywords}
	Consensus, false data injection attack, event-triggered observer, networked system, passive systems.
\end{keywords}

\section{Introduction} \label{Intro}
Cyber-Physical Systems (CPS) are often exposed to adversarial attacks due to their operation over insecure communication networks. In particular, FDI attacks are a major threat to the secure operation of a CPS, as they can corrupt both the actuation signal and the sensor measurements \cite{huo2022observer,joo2020resilient}. Since the sensor feedback cannot be trusted in this situation, cryptographic tools like message authentication code \cite{forouzan2008cryptography}, etc., have been popular as these guarantee sensor measurement integrity from such attacks by preventing data alteration in transit \cite{forouzan2008cryptography}. This situation is shown in Fig.~\ref{fig:attack_model}, where a cryptographic authenticator is employed in conjunction with an observer to design the secure controllers. However, continuous use of cryptographic tools can generate computation and communication overhead primarily due to the limited resources of CPS \cite{khazraei2022attack}. As a remedy, one could employ an event-based cryptographic authentication scheme where the communication is realized using sensors (such as ultrasonic and infrared) that are practically discrete or event-based \cite{carullo2001ultrasonic}. Consequently, the controller can ignore an attack injected in between adjacent events, limiting adverse effects on the system. On the other hand, actuators usually work in a continuous fashion and have faster actuation than sensor measurements (for example, the motor in Khepera IV mobile robot \cite{khepera}). Thereby, an attack injected in between adjacent events at the actuator side cannot be rejected by the actuator. For these practical reasons, it is viable to consider an event-based cryptographic authentication at the sensor end and the FDI attacks at the actuator end.

\begin{figure}[t]
	\centering
	\includegraphics[width = 0.35\textwidth]{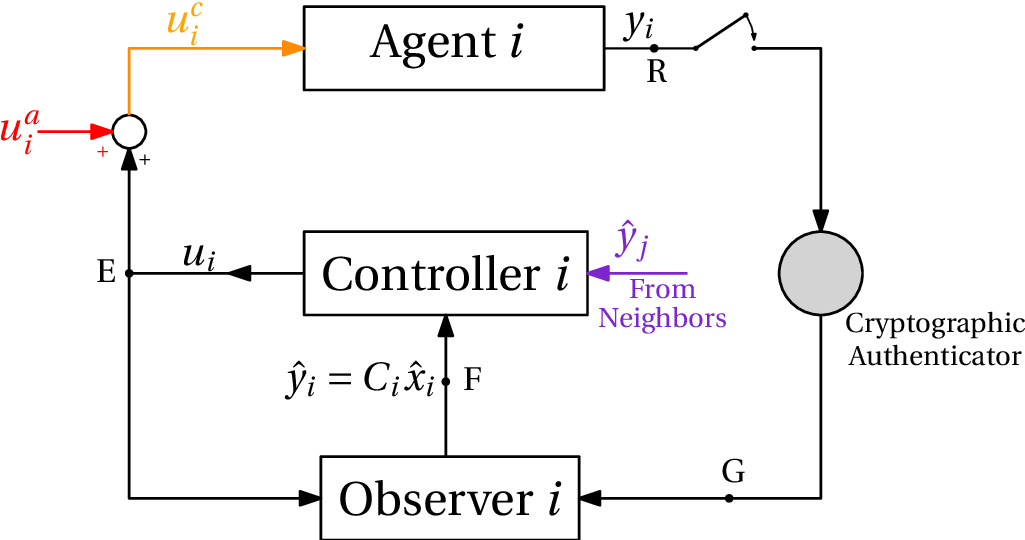}
	\caption{The $i^{\text{th}}$ agent under the FDI attack at the actuator end and with cryptographic authenticator at the sensor end.}
	\label{fig:attack_model}
\end{figure}

For the secure control of CPS, attack detection is often the foremost objective \cite{tan2020brief}. While most of the existing works focus on residual-based and estimation-based attack detectors, these might be vulnerable to destabilizing stealthy attacks \cite{khazraei2022attack,guo2022stealthy}. The authors in \cite{eyisi2014energy} presented an energy equivalence-based attack detection approach that checks for equality between the supplied, and the dissipated and stored energies. Consequently, the attacks are classified as passive and non-passive. A generalized notion of this is referred to as passivity theory \cite{khalil2002nonlinear}, which relies on an inequality between the supplied energy and the change in stored energy. Besides being a flexible tool for control design and analyzing the system's stability, the passivity property can be leveraged in identifying an attack in a CPS. Unlike \cite{eyisi2014energy}, our attack identification approach relies on a passivity \emph{inequality} between the energy supplied and the change in the system's stored energy, along with an event-based observer setting. With an observer, while an event-triggering mechanism can be implemented either by continuously monitoring the system output \cite{petri2021event} or the observer output \cite{shi2014event}, we consider the former case in this paper where the event condition is independent of the observer state in the presence of a malicious attack.

Attack mitigation is the second necessary step for securely controlling a CPS towards its desired goal following an attack detection. For this, it is essential to have (accurate) knowledge of the attack signal to mitigate its effect, usually accomplished using attack estimators designed as part of the state observers. Some popular works in this direction are \cite{meng2020adaptive,jin2017adaptive}, where the attack estimators operate continuously irrespective of the absence of an attack signal and hence, are computationally expensive \cite{yan2017resilient}. Unlike these works, this paper proposes a switching-based controller scheme employing the attack estimator only in the case of attack detection and works according to an event-triggered authentication scheme.

\emph{Main Contributions:} As shown in Fig.~\ref{fig:attack_model}, we consider a multi-agent system comprising linear passive agents interacting towards achieving output consensus over a network susceptible to malicious FDI attacks. We envisage the presence of actuator attacks $u_i^a$ while relying on an event-based cryptographic authentication for sensor measurement integrity. The proposed controller receives information about estimated states $\hat{x}_i$ from its cascaded observer and the estimated outputs $\hat{y}_j$ from the neighboring agents and works in the dual modes, namely, the \textit{normal} mode (no-attack detection) and the \textit{defense} mode (attack detection). We first obtain the Zeno-free event condition for cryptographic authentication for the stable operation of the observer, independent of the overall system's stability. We then propose a passivity-based approach for the detection of actuator attacks. It is shown that the difference between the agents' output remains bounded under the action of the proposed switching controller in the defense mode and the previously derived event condition. Alternatively, it can be said that the system achieves \emph{practical} output consensus. 

\emph{Paper Organization:} Section~\ref{Prob and model} describes the system and attack models, followed by the problem statement. Section~\ref{mitigation} introduces the Zeno-free event condition for observer stability. Section~\ref{attack_sec} discusses attack detection and mitigation by introducing the idea of the switching controller, followed by proof of the closed-loop system's stability. Section~\ref{Sim} provides simulation results before we conclude the paper and discuss the future direction of work in Section~\ref{Conclusion}.

\subsection{Notations} \label{notation}
The set of real, non-negative real numbers and positive integers is denoted by $\mathbb{R}$, $\mathbb{R}_+$, and $\mathbb{Z}_+$, respectively. $\pmb{0}_n \in \mathbb{R}^n$, $\pmb{1}_n \in \mathbb{R}^n$ are column vectors with all entries $0$, $1$, respectively. $I_n \in \mathbb{R}^{n \times n}$ is an identity matrix. The induced $2$-norm (resp., Euclidean norm) for any matrix $M$ (resp., vector $z$) is represented by $\|M\|: \mathbb{R}^{m \times n} \to \mathbb{R}_+$ (resp., $\|z\|: \mathbb{R}^{n} \to \mathbb{R}_+$). The symbol $\otimes$ denotes the Kronecker product of two matrices and $\text{diag}\{k_1, \ldots, k_n\} \in \mathbb{R}^{n \times n}$ denotes a diagonal matrix with diagonal entries $k_i$. The maximum and minimum eigenvalues of a symmetric matrix $M \in \mathbb{R}^{n \times n}$ are represented by $\lambda_{\text{max}}(M)$ and $\lambda_{\text{min}}(M)$, respectively. The Moore–Penrose pseudo-inverse of a matrix $M \in \mathbb{R}^{m \times n}$ is $M^{\dagger} \in \mathbb{R}^{n \times m}$, having the property $M M^{\dagger} M = M$. If $M$ has full column rank, $M^\dagger = (M^T M)^{-1}M^T$ and $M^\dagger M = I_n$ \cite{rao1971generalized}. For a digraph with edge set $\mathbb{E}$, $(i, j) \in \mathbb{E}$ is a directed edge from node $i$ to node $j$. The Laplacian for a graph with $N$ nodes is denoted by $\mathcal{L}_s \in \mathbb{R}^{N \times N}$ (please refer to \cite{godsil2001algebraic} for details). 

\subsection{Preliminaries}
\begin{definition}[Passive System \cite{khalil2002nonlinear}] \label{passive_def}
	Consider the system 
	\begin{equation}\label{system}
		\dot{x} =f(x, u); \quad y =h(x, u),
	\end{equation}
	with state $x \in \mathbb{R}^m$, control input $u \in \mathbb{R}^p$, and output $y\in \mathbb{R}^p$. The function $f$ is locally Lipschitz, $h$ is continuous, $f(\pmb{0}_{m}, \pmb{0}_{p}) = \pmb{0}_{m}$ and $h(\pmb{0}_{m}, \pmb{0}_{p}) = \pmb{0}_{p}$. The system \eqref{system} is said to be passive if there exists a continuously differentiable storage function $S(x): \mathbb{R}^m \to \mathbb{R}$, $S(x)\geq 0, S(\pmb{0}_m) = 0$, such that 
	\begin{equation}\label{passivity_inequality}
		u^{T} y \geq \dot{S}= ({\partial S}/{\partial x}) f(x, u), \quad \forall(x, u) \in \mathbb{R}^{m} \times \mathbb{R}^{p}.
	\end{equation}
\end{definition}
Note that the storage function $S(x)$ in \eqref{passivity_inequality} (which is also referred to as passivity inequality) is not unique. However, the quadratic storage function of the form $S(x) = (1/2) x^T x$ has its own vantages over others in analyzing linear systems, as it is computationally moderate, easy to implement, and close to the concept of energy in linear systems. Particularly, it is shown in \cite{trentelman1997storage} that any storage function can be represented as a quadratic storage function for a linear dynamical system. Motivated by this fact, we exploit the quadratic storage function in our analysis in this paper.


\begin{lem}[Implications of Passivity \cite{chopra2012output}] \label{passive}
	Consider the linear system
		\begin{equation} \label{sys}
				\dot{x} = Ax + Bu, \quad y = Cx,
		\end{equation}
	where $x \in \mathbb{R}^m, u \in \mathbb{R}^p, y \in \mathbb{R}^p$ are state, input and output vectors, and $A \in \mathbb{R}^{m \times m}, B \in \mathbb{R}^{m \times p}$ and $C \in \mathbb{R}^{p \times m}$ are system, input and output matrices, respectively. The system \eqref{sys} is said to be passive if there exists a differentiable scalar storage function $V: \mathbb{R}^m \to \mathbb{R}$ for \eqref{sys} such that $V(x) \geq 0, V(\pmb{0}_m)=0$, and $W(x) \geq 0$, such that $(\frac{\partial V}{\partial x})^T Ax = -W(x), \ (\frac{\partial V}{\partial x})^T B = y^T$.
\end{lem}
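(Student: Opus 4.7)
The plan is to verify the passivity inequality $u^{T}y \geq \dot V$ from Definition~\ref{passive_def} directly, using only the two algebraic identities the lemma imposes on $\partial V/\partial x$. Since $V$ is continuously differentiable and \eqref{sys} is smooth in $x$, I would begin by computing $\dot V$ along trajectories of \eqref{sys} via the chain rule,
\begin{equation*}
\dot V = \left(\frac{\partial V}{\partial x}\right)^{\!T}(Ax+Bu) = \left(\frac{\partial V}{\partial x}\right)^{\!T}Ax + \left(\frac{\partial V}{\partial x}\right)^{\!T}Bu,
\end{equation*}
thereby separating the autonomous drift contribution from the input-driven contribution.

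Next, I would substitute the two hypotheses of the lemma. The drift piece $\left(\partial V/\partial x\right)^{T}Ax$ collapses to $-W(x)$ by the first identity, and the input piece $\left(\partial V/\partial x\right)^{T}Bu$ equals $y^{T}u = u^{T}y$ by the second identity. Combining these gives the pointwise equality $\dot V = u^{T}y - W(x)$. Because $W(x)\geq 0$ everywhere, this immediately yields $\dot V \leq u^{T}y$ for every $(x,u)\in\mathbb{R}^{m}\times\mathbb{R}^{p}$, which is exactly \eqref{passivity_inequality}. Coupled with the standing conditions $V(x)\geq 0$ and $V(\pmb{0}_m)=0$, $V$ qualifies as a valid storage function in the sense of Definition~\ref{passive_def}, and hence \eqref{sys} is passive.

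There is essentially no technical obstacle here: the two hypotheses are engineered precisely so that the two summands of $\dot V$ reduce to the ``dissipation'' term $-W(x)$ and the ``supply'' term $u^{T}y$ that appear in the passivity inequality. The only bookkeeping worth noting is dimensional consistency: $\left(\partial V/\partial x\right)^{T}B$ is a $1\times p$ row vector, which matches the row vector $y^{T}=(Cx)^{T}$, so the second identity is well-posed. No controllability, observability, or sign-definiteness of $V$ beyond the nonnegativity already stated is required, and the argument applies verbatim whether $V$ is chosen as the quadratic form $(1/2)x^{T}x$ highlighted after Definition~\ref{passive_def} or as any other admissible storage function.
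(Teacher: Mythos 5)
Your argument is correct: the paper states this lemma as a cited preliminary from \cite{chopra2012output} without reproducing a proof, and your direct chain-rule computation $\dot V = (\partial V/\partial x)^T Ax + (\partial V/\partial x)^T Bu = -W(x) + u^T y \leq u^T y$ is exactly the standard verification underlying it. Nothing is missing; the two hypotheses are indeed designed to make the drift and input terms collapse to the dissipation and supply terms of \eqref{passivity_inequality}.
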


\section{System Modeling and Problem Description} \label{Prob and model}

\subsection{The Agent and Attack Models} \label{system modeling}

\subsubsection{Agent Model}
We consider a multi-agent system comprising $N$ linear passive agents under the influence of FDI attack and having combined system dynamics (see Fig.~\ref{fig:attack_model}):
\begin{equation} \label{model_complete}
\dot{x} = Ax + Bu^c, \quad y = Cx,
\end{equation}
where $x = [x_1^T, \ldots, x_N^T]^T \in \mathbb{R}^{Nm}$, $u^c = [(u^c_1)^T, \ldots, (u^c_N)^T]^T \in \mathbb{R}^{Np}$ and $y = [y_1^T, \ldots, y_N^T]^T \in \mathbb{R}^{Np}$ are the stacked state, input and output vectors, respectively, where $x_i \in \mathbb{R}^m$, $u_i \in \mathbb{R}^p$ and $y_i \in \mathbb{R}^p$ $\forall i$. Further, ${A} = \text{diag}\{A_1,\ldots,A_N\} \in \mathbb{R}^{Nm \times Nm}$, ${B} = \text{diag}\{B_1,\ldots,B_N\} \in \mathbb{R}^{Nm \times Np}$, and ${C} = \text{diag}\{C_1,\ldots,C_N\} \in \mathbb{R}^{Np \times Nm}$ are the stacked system, input and output block diagonal matrices, respectively. We incorporate the following reasonable assumptions on system \eqref{model_complete} (unless otherwise explicitly stated):
\begin{enumerate} \label{Assumption}
\item[\textbf{(A1)}] The matrix pair $(A_i, B_i)$ is controllable and $(A_i,C_i)$ is observable for all $i$.
\item[\textbf{(A2)}] The matrix $B_i$ has full column rank for all $i$.
\item[\textbf{(A3)}] The agents are strongly connected over a fixed directed network having Laplacian $\mathcal{L}_s$.
\end{enumerate}

\begin{remark}
Note that the assumption \textbf{(A2)} follows from the fact that the agents are usually under-actuated where $p < m$. If the rank of $B_i$ is less than the number of columns for an under-actuated system, there exists a set of inputs that do not affect the system dynamics, which is not practically desired. Further, in over-actuated systems, if the columns of $B_i$ are linearly dependent, some inputs will be redundant. Thus, it is reasonable to remove dependent columns and corresponding inputs with the assumption that $B_i$ has full column rank (please refer to \cite[Section~6.2.1]{chen1999linear}). Some of the most widely used robotic models, like quadcopter and car-like models, among others, satisfy \textbf{(A2)} \cite{yan2017resilient,lynch2017modern}. Other examples are$-$ tugboats \cite{ihle2007passivity}, which follow second-order Lagrangian dynamics, and harmonic oscillators \cite{xia2016synchronization}. Both these systems are passive and satisfy the preconditions \textbf{(A1)} and \textbf{(A2)}.
\end{remark}

\subsubsection{FDI Attack Model}
The FDI attack on the $i^\text{th}$ agent can be modeled as shown in Fig.~\ref{fig:attack_model}, implying that the compromised signal reaching its actuator can be written as
\begin{equation}\label{attack_ip}
u^c_i = u_i + u^a_i,
\end{equation}
for each $i$, where $u_i$, $u^a_i$, and $u^c_i$ are designed control, actuator attack, and compromised control signals, respectively. As discussed in Section~\ref{Intro}, since we employ an event-based cryptographic authentication scheme for the sensor measurements collected at point R in Fig.~\ref{fig:attack_model}, the attack at the sensor end can be neglected for a suitably chosen event condition such that the observer error dynamics remains stable. The following reasonable limitations are considered on the actuator attack signal:
\begin{assumption} \label{attack_assum}
The attack signal $u^a {=} [(u_1^a)^T, \ldots, (u_N^a)^T]^T$ $\in \mathbb{R}^{Np}$ is bounded and has a bounded derivative, that is, $\|u^a\| \leq \bar{u}^a$ and $\|\dot{u}^a\| \leq \tilde{u}^a$, for some $\bar{u}^a, \tilde{u}^a \in \mathbb{R}_+$.
\end{assumption}

Assumption~\ref{attack_assum} is motivated by the fact that the attack signal cannot be arbitrarily large, as the attacker has limited resources \cite{peng2020switching}. Secondly, arbitrarily large attack signals are easy to detect, while the attacker generally wants to remain stealthy. Further, this paper does not deal with sensor noise. In practice, noise and attacks have inherently different characteristics; noise is caused randomly and unintentionally, while attacks are injected \emph{intentionally} by an intruder to mislead or even paralyze the whole network’s behaviors while remaining undetected \cite{joo2020resilient}.

\subsection{Event-based Observer and Switching Controller}

\subsubsection{The Observer}
Due to event-triggered sensor feedback, the latest measurements $y$ are not always available at point G (in Fig.~\ref{fig:attack_model}). Let the sensor measurement at the last event $t_k$ be $\bar{y} = y(t_k), \ k \in \mathbb{Z}_+$. We define an error variable $e(t)=\bar{y} - y(t), \ t \in [t_k, t_{k+1}), \forall k$. Clearly, $e(t_k) = \pmb{0}_{Np}, \forall k$. With the error term $e$, the event-triggered observer dynamics is given by:
\begin{equation} \label{observer}
\dot{\hat{x}}  = A \hat{x} + B (u + \hat{u}^a) + H(y-\hat{y} + e), \quad \hat{y} = C\hat{x},
\end{equation}
where $H \in \mathbb{R}^{Nm \times Np}$ is the observer gain matrix, $\hat{x}$ is the estimated state and $\hat{u}^a$ is the estimated input attack signal proposed as:
\begin{equation} \label{attack_estimate}
\hat{u}^a = B^\dagger (\dot{\hat{x}} - A\hat{x} - Bu + B\hat{u}^a(t - t_d)),
\end{equation}
where $t_d > 0$ is a small constant reaction time, representing the time required to observe the effect of control $u(t)$ on the sensor output. Clearly, $\hat{u}^a$ is Lipschitz under Assumption~\ref{attack_assum}, i.e.,
\begin{equation} \label{lipsch}
	\|\hat{u}^a (t) - \hat{u}^a (t-t_d)\| \leq \psi t_d,
\end{equation}
for some constant $\psi \in \mathbb{R}_+$. Ideally, $t_d \to 0$, however, it may not be feasible in practice due to the components' response time and can be assumed sufficiently small. Next, we define the state estimation error $\xi(t) = x(t)- \hat{x}(t)$, whose time-derivative, using \eqref{model_complete} and \eqref{observer}, is obtained as
\begin{equation} \label{state_error}
\dot{\xi} = \dot{x} - \dot{\hat{x}} = (A-HC) \xi + B(u^a - \hat{u}^a) - He.
\end{equation}
In the absence of an attack (i.e., $u^a = \pmb{0}_{Np}$), attack estimator is inactive, that is, $\hat{u}^a \coloneqq \pmb{0}_{Np}$, and hence, \eqref{state_error} can be written as $\dot{\xi} = (A-HC) \xi- He$, which is independent of $t_d$, and can be stabilized with the event condition as provided in further analysis in the paper. Without loss of generality, we consider the following assumption for the initialization of the observer states:
\begin{assumption} \label{assum_obs}
	The state estimation error is zero at the instant of time when an attack starts. For instance, if the attack begins at time $t = t_a$, $\xi(t_a) \approx \pmb{0}_{Nm} \iff \hat{x}(t_a) \approx x(t_a)$.
\end{assumption}

Please note that Assumption~\ref{assum_obs} does not imply a prior knowledge of the attack time $t_a$ and only assumes that the observer error is small at the beginning of an attack. Similar assumptions have been made in literature \cite{boem2017distributed}, as such assumptions are generally required for attack detection with residual based attack detection approaches but not mentioned explicitly. We have the following proposition about the boundedness of the signal $B(\hat{u}^a - u^a)$ (whose proof is discussed in Section~\ref{attack_sec}):

\begin{prop} \label{proposition}
Under the proposed event-based observer \eqref{observer}, it holds that $\|B(u^a - \hat{u}^a)\| \leq \varPhi, \forall t \geq t_a$, where $\varPhi = \psi t_d\|B\| \in \mathbb{R}_+$ is a constant.
\end{prop}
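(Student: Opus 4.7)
The plan is to show that the estimator \eqref{attack_estimate}, together with the observer \eqref{observer} and the plant \eqref{model_complete}, forces $B\hat{u}^a(t)$ to reproduce $Bu^a(t)$ up to a one-step difference of $\hat{u}^a$, which the Lipschitz-in-delay bound \eqref{lipsch} then controls by $\psi t_d$.

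First, I would pre-multiply \eqref{attack_estimate} by $B$. Under (A2) the block-diagonal matrix $B$ inherits full column rank, so $B^\dagger B = I_{Np}$ and $BB^\dagger B = B$; in particular $B\hat{u}^a(t-t_d)$ already lies in $\mathrm{range}(B)$, and the identity collapses to $B\hat{u}^a(t) = BB^\dagger\bigl(\dot{\hat{x}}(t) - A\hat{x}(t) - Bu(t)\bigr) + B\hat{u}^a(t-t_d)$. Using \eqref{model_complete} I can write $Bu^a(t) = \dot{x}(t) - Ax(t) - Bu(t)$, and using \eqref{observer} I can write the bracketed quantity as $B\hat{u}^a(t) + H(y-\hat{y}+e)$. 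Subtracting these two identities (and inserting $\dot{x} - \dot{\hat{x}} = \dot{\xi}$ and $x - \hat{x} = \xi$ where needed), I expect to obtain a relation of the form
\[
B\bigl(u^a(t) - \hat{u}^a(t)\bigr) \;=\; B\bigl(\hat{u}^a(t) - \hat{u}^a(t-t_d)\bigr) + R(t),
\]
where $R(t)$ gathers contributions depending on the state-estimation error $\xi$ and the observer innovation.

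Next, I would argue that $R(t)$ vanishes (or is absorbed into the same $O(t_d)$ scale). Assumption~\ref{assum_obs} gives $\xi(t_a)\approx 0$ at the instant the attack starts, and the Zeno-free event condition established in Section~\ref{mitigation} keeps $\xi$ (and hence the projected innovation $BB^\dagger H(y-\hat{y}+e)$) controlled throughout. Applying \eqref{lipsch} then yields the claim
\[
\bigl\|B\bigl(u^a(t) - \hat{u}^a(t)\bigr)\bigr\| \;\le\; \|B\|\,\bigl\|\hat{u}^a(t) - \hat{u}^a(t-t_d)\bigr\| \;\le\; \|B\|\,\psi t_d \;=\; \varPhi.
\]

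The principal obstacle is the algebraic collapse in the first step: because $\dot{\hat{x}}$ itself contains $\hat{u}^a(t)$ through the observer, pre-multiplying by $B$ produces an apparent circular dependence, and the delayed term $B\hat{u}^a(t-t_d)$ in \eqref{attack_estimate} is precisely what breaks this loop into the telescoping form above. A secondary subtlety is the startup interval $[t_a, t_a + t_d]$, on which $\hat{u}^a(t-t_d)=0$ since the estimator is inactive prior to the attack; on this interval the bound should follow directly from the smoothness of $\hat{u}^a$ captured by $\psi$ together with Assumption~\ref{assum_obs}.
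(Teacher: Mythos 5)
Your overall strategy --- reduce $B(u^a - \hat{u}^a)$ to the one-step difference $B(\hat{u}^a(t) - \hat{u}^a(t-t_d))$ plus a remainder driven by $\xi$, then invoke \eqref{lipsch} and Assumption~\ref{assum_obs} --- is exactly the paper's, and your identification of the circular dependence through $\dot{\hat{x}}$ as the crux is on target. However, the specific algebraic route you describe does not produce the telescoping relation you ``expect to obtain.'' If you substitute the observer equation \eqref{observer} for the bracketed quantity, you get $\dot{\hat{x}} - A\hat{x} - Bu = B\hat{u}^a(t) + H(y - \hat{y} + e)$, and pre-multiplying \eqref{attack_estimate} by $B$ then yields
\begin{equation*}
B\hat{u}^a(t) = B\hat{u}^a(t) + BB^\dagger H(y - \hat{y} + e) + B\hat{u}^a(t - t_d),
\end{equation*}
in which $B\hat{u}^a(t)$ cancels and no information about $u^a$ survives. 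If instead you subtract your two identities directly, you recover $B(u^a - \hat{u}^a) = \dot{\xi} - A\xi + HC\xi + He$, which is just \eqref{state_error} rearranged --- a tautology in which the delayed term $\hat{u}^a(t-t_d)$ never appears, so \eqref{lipsch} has nothing to act on.

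The missing ingredient is the deliberate mismatch between the control subtracted inside the estimator and the control actually driving the plant in defense mode. The paper reads the $-Bu$ term in \eqref{attack_estimate} as $-B\hat{u}^n$ (the nominal part only), while the plant evolves under the compromised defense-mode input $u^c = \hat{u}^n - \hat{u}^a + u^a$ from \eqref{control_under_attack}. Writing $\dot{\hat{x}} = \dot{x} - \dot{\xi}$, $\hat{x} = x - \xi$ and substituting \eqref{model_complete} with this $u^c$ gives $\dot{\hat{x}} - A\hat{x} - B\hat{u}^n = Bu^a - B\hat{u}^a(t) - \dot{\xi} + A\xi$; the extra $-B\hat{u}^a(t)$ generated by the applied compensation term is precisely what pairs with the left-hand $B\hat{u}^a(t)$ and the delayed $B\hat{u}^a(t-t_d)$ to yield $B(\hat{u}^a - u^a) = -B(\hat{u}^a(t) - \hat{u}^a(t-t_d)) + \nu$ with $\nu = BB^\dagger(A\xi - \dot{\xi})$. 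From there your final two steps (bounding the first term by $\psi t_d \|B\|$ via \eqref{lipsch}, and arguing $\|\nu\|$ is negligible via Assumption~\ref{assum_obs} and Theorem~\ref{event_thm}) go through as you wrote them. Note that under your consistent reading of $u$ as the full $u^d$ in both the plant identity and the estimator bracket, the algebra instead gives $B(\hat{u}^a - u^a) = B\hat{u}^a(t-t_d) + \nu$, whose norm is not of order $t_d$ --- so this is not a cosmetic issue but the point on which the bound actually turns.
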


\subsubsection{The Controller}
Let us introduce a logic $\delta$ such that $\delta = 0$ indicates ``no attack detection", while $\delta = 1$ indicates an ``attack detection". Based on this, the following switching-based control strategy is proposed:
\begin{equation}\label{switching_control}
u = u_{\delta} \coloneqq (1 - \delta)u^n + \delta u^d,
\end{equation}
where $u^n $ and $u^d$ are the control actions in \textit{normal} (no attack detection) and \textit{defense} (attack detection) modes, respectively. In the absence of an attack, it follows from \cite[Theorem~2.1]{chopra2012output} that the system \eqref{model_complete}, under passivity assumption, achieve output consensus with the following control: 
\begin{equation}\label{control_matrix}
u^n = -K \mathcal{L} y,
\end{equation}
where $\mathcal{L} \coloneqq (\mathcal{L}_s \otimes I_p) \in \mathbb{R}^{Np \times Np}$ is the extended Laplacian, $K \in \mathbb{R}_+$ is a constant gain, and $u^n \in \mathbb{R}^{Np}$ is the stacked input vector under \textit{normal} operation. Under the action of an observer, \eqref{control_matrix} can be written equivalently in terms of the estimated measurement signal $\hat{y}$ as \cite{li2011consensus}:
\begin{equation} \label{control_matrix_observer}
\hat{u}^n = -K \mathcal{L} \hat{y}, 
\end{equation}
which would be used in the subsequent analysis in this paper. 

\subsection{The Problem} \label{Prob}
Consider the linear passive system \eqref{model_complete} under FDI attacks \eqref{attack_ip} with observer dynamics \eqref{observer} equipped with an event-based authentication mechanism as shown in Fig.~\ref{fig:attack_model}. Let the system be governed by a switching-based control protocol \eqref{switching_control}, where the attack estimator operates only in the defense mode and the event-authentication mechanism is in operation for all time. Suppose that the aforementioned assumptions hold. The following problems are addressed in this paper:
\begin{itemize}
\item[{\bf (P1)}] Obtain an event condition for authentication feedback to the observer such that the observer error dynamics \eqref{state_error} is practically stable and the estimation error remains near the origin, i.e., $\lim_{t \to \infty}\|\xi\| \approx 0$.
\item[{\bf (P2)}] Device an FDI attack detection mechanism using passivity inequality \eqref{passivity_inequality} with the quadratic storage function.
\item[{\bf (P3)}] Design the switching control law $u$ in \eqref{switching_control} (effectively $u^d$) such that the system achieves (practical) output consensus, i.e., there exists a small constant $\epsilon_o > 0$ such that $\lim_{t \to \infty}\|y_i - y_j\| \leq \epsilon_o, \ \forall (i, j) \in \mathbb{E}$.
\end{itemize}

\section{Event Condition and Zeno Behavior} \label{mitigation}
In this section, we introduce the event condition and show that the observer states follow the agents' states closely with a small error. To ensure that the observer error dynamics \eqref{state_error} is stable and $\xi$ converges to zero, $V=\xi^T P \xi$ could be a candidate Lyapunov function, where $P$ is a symmetric positive definite matrix satisfying $(A-HC)^T P + P (A-HC) = -Q$ for some symmetric positive definite matrix $Q$. The time-derivative of the Lyapunov function $V$, along the error dynamics \eqref{state_error}, is
\begin{align*} \label{lyap_mid}
	\dot{V} & = \xi^T P \dot{\xi} + \dot{\xi}^T P \xi \\
	& = \xi^T (P(A - HC) + (A - HC)^T P)\xi \\
	& ~~~ + 2 \xi^T P B(u^a - \hat{u}^a) - 2 \xi^T P He \\
	& \leq - \xi^T Q \xi + 2 \|\xi\| \|P B(u^a - \hat{u}^a)\| + 2 \|\xi\| \|P He\|.
\end{align*}
From the inequality\footnote{The proof follows from the fact that the arithmetic mean is greater than or equal to geometric mean for any two positive numbers $cX^2$ and $Y^2/c, c > 0$.} $2 XY \leq c X^2 + ({1}/{c}) Y^2$ for some $c>0$ and $X, Y \in \mathbb{R}$, it follows that $2 \|\xi\| \|P B(u^a - \hat{u}^a)\| \leq  c\|\xi\|^2 {+} ({\|P B(u^a - \hat{u}^a)\|^2}/{c})$ and $2 \|\xi\| \|P He\| \leq c\|\xi\|^2 {+} ({\|P He\|^2}/{c})$. Using these, it can be written that 
\begin{align*}
	\dot{V} & {\leq} - \lambda_{\text{min}} (Q) \|\xi\|^2 {+} 2 c\|\xi\|^2 \\
	& ~~~ + (1/c)({\|P B(u^a {-} \hat{u}^a)\|^2} + {\|P He\|^2})\\
	& {\leq} - \lambda_{\text{min}} (Q)\left(1-\frac{2c}{\lambda_{\text{min}}(Q)}\right) \|\xi\|^2\\
	& ~~~ + (1/c)({\|P B(u^a - \hat{u}^a)\|^2} {+} {\|P He\|^2})\\
	& {\leq} - \frac{\lambda_{\text{min}} (Q)}{\lambda_{\text{max}} (P)}\left(1-\frac{2c}{\lambda_{\text{min}}(Q)}\right) \xi^T P \xi\\
	& ~~~ +  \frac{\|P\|^2}{c}\|B(u^a - \hat{u}^a)\|^2 + \frac{\|PH\|^2}{c} \|e\|^2\\
	& {\leq} - \alpha V(\xi) + \beta \varPhi^2 + \gamma \|e\|^2,
\end{align*}
where $\alpha \coloneqq \frac{\lambda_{\text{min}}(Q)}{\lambda_{\text{max}}(P)}\left(1-\frac{2c}{\lambda_{\text{min}}(Q)}\right) > 0$, $\beta \coloneqq {\|P\|^2}/{c} >0$, $\gamma \coloneqq {\|PH\|^2}/{c}>0$, and $c \in (0, \frac{1}{2}\lambda_{\text{min}}(Q))$ is a design variable. If the event condition is set as $\gamma \|e\|^2 + \beta \varPhi^2 \leq \rho \alpha V(\xi)$ for some $\rho \in (0,1)$, $\dot{V} \leq -\alpha(a-\rho)V$, implying stability of the observer error dynamics. However, non-availability of the entire error vector $\xi$ and requirement to monitor both system and observer states restricts its implementation, as discussed in Section~\ref{Intro}. Therefore, motivated by the dynamic event-triggered scheme \cite{petri2021event}, we introduce an auxiliary variable $\eta$ with dynamics:
\begin{equation} \label{aux}
	\dot{\eta} = -c_1 \eta + c_2 \|e\|^2,
\end{equation}
where $\eta \in \mathbb{R}$, and $c_1>0$ and $c_2 \geq 0$ are design parameters. Clearly, solution of \eqref{aux} is given by $\eta = {\rm e}^{-c_1 t} \eta(0) + \int_{0}^{t} {\rm e}^{-c_1 (t-\tau)} c_2 \|e\|^2 d\tau$, which non-negative for $\eta(0) \geq 0$. Thus, $\eta \geq 0, \forall t \geq 0 \iff \eta(0) \geq 0$.

\begin{thm} \label{event_thm}
	The observer error dynamics \eqref{state_error} is stable and the error $\xi$ converges to a small ball around the origin under the following dynamic event condition:
	\begin{equation} \label{event_cond}
		({\gamma +d c_2}) \|e\|^2 =  (d c_1 \eta + \varOmega),
	\end{equation}
	where $\varOmega = \varepsilon - \beta \varPhi^2 > 0$, $\varepsilon \in \mathbb{R}_+$ and $d \in \mathbb{R}_+$ are arbitrary design constants ($\beta, \gamma$ are defined as above).
\end{thm}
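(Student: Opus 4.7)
The plan is to augment $V = \xi^T P \xi$, whose bound $\dot V \leq -\alpha V + \beta\varPhi^2 + \gamma\|e\|^2$ is already in hand from the derivation preceding the theorem, with the nonnegative auxiliary variable $\eta$ introduced in \eqref{aux}. Taking $W = V + d\eta$ as a composite Lyapunov candidate and differentiating along trajectories collapses the two error terms into a single coefficient, giving
\begin{equation*}
\dot W \leq -\alpha V - d c_1 \eta + (\gamma + d c_2)\|e\|^2 + \beta\varPhi^2.
\end{equation*}
Because $e(t_k) = \pmb{0}_{Np}$ at every event and the trigger \eqref{event_cond} is defined as the first instant the threshold is met, the inequality $(\gamma + d c_2)\|e\|^2 \leq d c_1 \eta + \varOmega$ must hold throughout each inter-event interval $[t_k, t_{k+1})$; substituting it cancels the $\pm d c_1 \eta$ pair and collapses the bound to $\dot W \leq -\alpha V + \varepsilon$.

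The next step is to extract practical boundedness of $\xi$ from this inequality. Since the right-hand side carries $V$ rather than $W$, the comparison lemma does not apply to $W$ directly, so I would handle $\eta$ on its own by inserting the same inter-event bound on $\|e\|^2$ into \eqref{aux}. This yields the scalar linear inequality $\dot \eta \leq -(c_1 \gamma/(\gamma + d c_2))\,\eta + c_2\varOmega/(\gamma + d c_2)$, whose contraction rate is strictly positive and whose ultimate bound is $c_2\varOmega/(c_1\gamma)$. Feeding this ultimate bound on $\eta$ back into $(\gamma+dc_2)\|e\|^2 \leq d c_1 \eta + \varOmega$ gives an explicit asymptotic bound on $\gamma\|e\|^2$, and re-inserting it into the original $\dot V$ inequality produces a clean $\dot V \leq -\alpha V + \kappa$ with $\kappa$ proportional to $\varepsilon$. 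The comparison lemma then confines $V(t)$ to the ball $\kappa/\alpha$, so via $\|\xi\|^2 \leq V/\lambda_{\text{min}}(P)$ the observer error stays in a ball whose radius can be shrunk by the design constant $\varepsilon$ (equivalently $\varOmega$) appearing in \eqref{event_cond}.

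The main obstacle I expect is precisely this $V$-versus-$W$ mismatch in $\dot W \leq -\alpha V + \varepsilon$: a one-shot exponential-comparison bound $\dot W \leq -\alpha W + \varepsilon$ is not available without imposing extra structure such as $c_1 > \alpha$ and reshaping the event threshold accordingly. The two-stage detour through an independently derived $\eta$-bound sidesteps that restriction at modest bookkeeping cost; it relies only on $\eta(0)\geq 0$, which (as noted immediately after \eqref{aux}) guarantees $\eta(t)\geq 0$ and hence $W\geq V\geq 0$ throughout, preserving the sign arguments used in every step above.
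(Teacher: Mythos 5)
Your proof follows the paper's argument exactly through its main step: the same composite function $U=V+d\eta$ (your $W$), the same differentiation, and the same substitution of the trigger \eqref{event_cond} to reach $\dot U\leq-\alpha V+\varepsilon$. Where you diverge is in how you close the argument. The paper stops at $\dot U\leq-\alpha V(\xi)+\varepsilon$ and simply asserts practical stability, leaving implicit exactly the $V$-versus-$U$ mismatch you flag: the comparison lemma cannot be applied to $U$ when only $V$ appears on the right-hand side. Your two-stage repair is sound: inserting the inter-event bound $(\gamma+dc_2)\|e\|^2\leq dc_1\eta+\varOmega$ into \eqref{aux} gives $\dot\eta\leq-\bigl(c_1\gamma/(\gamma+dc_2)\bigr)\eta+c_2\varOmega/(\gamma+dc_2)$, hence the ultimate bound $c_2\varOmega/(c_1\gamma)$ on $\eta$; feeding this back yields $\gamma\|e\|^2\leq\varOmega$ (up to a decaying transient), and re-inserting into the original bound on $\dot V$ gives $\dot V\leq-\alpha V+\varepsilon$ plus a vanishing term, to which the comparison lemma applies directly and gives the ultimate bound $\|\xi\|^2\leq\varepsilon/(\alpha\lambda_{\text{min}}(P))$. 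This buys a rigorous ultimate-boundedness conclusion without the extra structural assumption $c_1>\alpha$ that a one-shot bound $\dot U\leq-\alpha U+\varepsilon$ would require; the paper's version is shorter but relies on the reader accepting the final step without justification. One minor caveat: your $\eta$-bound is asymptotic rather than uniform in time, so the resulting $\kappa$ in $\dot V\leq-\alpha V+\kappa$ holds only after a transient (or with an added exponentially decaying term), but this does not affect the practical-stability conclusion.
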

\begin{proof}
	Consider the composite Lyapunov function $U = V(\xi) + d \eta$, whose time-derivative, along \eqref{state_error} and \eqref{aux}, is $\dot{U}	 {=} \dot{V} + d \dot{\eta}  \leq - \alpha V(\xi) + \gamma \|e\|^2 + \beta \varPhi^2 - d c_1 \eta + d c_2 \|e\|^2  = - \alpha V(\xi) + (\gamma + dc_2) \|e\|^2 + \beta \varPhi^2 - d c_1 \eta$. Under the event condition \eqref{event_cond} and substituting for $\varOmega$, yields $\dot{U} \leq - \alpha V(\xi) + \varepsilon$. Now, it can be concluded that the dynamics \eqref{state_error} is practically stable and the estimation error $\xi$ converges to a small ball near the origin having its radius dependent on $\varepsilon$.
\end{proof}

Next, we prove the exclusion of Zeno behavior by showing that there exists a positive minimum inter event time (MIET). Before proceeding, we consider that the system's output has bounded derivative, i.e., $\|\dot{y}\| = \|C\dot{x}\| = \|CAx + CBu\| \leq \sigma, \ \forall t \geq 0$, where $\sigma$ is arbitrary large positive constant. 

\begin{thm} \label{MIET_thm}
	Consider error dynamics \eqref{state_error}. Under the event condition \eqref{event_cond}, the system \eqref{state_error} has positive MIET, $\tau = \frac{1}{2\sigma} \sqrt{\frac{\varOmega}{\gamma +d c_2}}$ and does not exhibit Zeno behavior.
\end{thm}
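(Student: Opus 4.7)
The plan is to combine two ingredients: an upper bound on the growth of $\|e\|$ between consecutive events, and a lower bound on $\|e\|$ at the exact moment the next event is triggered. Together these force a strictly positive lower bound on the inter-event interval, which is by definition the MIET.

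For the growth bound, the first step is to note that at every event instant $t_k$ the hold variable $\bar{y}$ is refreshed, so $e(t_k)=\pmb{0}_{Np}$, and that between events $\bar{y}$ is held constant, giving $\dot{e}=-\dot{y}$. Invoking the standing assumption $\|\dot{y}\|\leq\sigma$, integration on $[t_k,t)$ immediately yields $\|e(t)\|\leq\sigma(t-t_k)$. For the matching lower bound, I would argue that the triggering rule \eqref{event_cond} cannot fire before $(\gamma+dc_2)\|e\|^2$ reaches $dc_1\eta+\varOmega$; since $\eta(t)\geq 0$ for all $t$ (as already noted after \eqref{aux} via the explicit integral solution with $\eta(0)\geq 0$), the right-hand side is bounded below by $\varOmega$. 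Hence $\|e(t_{k+1})\|\geq\sqrt{\varOmega/(\gamma+dc_2)}$. Combining the two inequalities at $t=t_{k+1}$ gives $\sigma(t_{k+1}-t_k)\geq\sqrt{\varOmega/(\gamma+dc_2)}$, i.e.\ $t_{k+1}-t_k\geq\tfrac{1}{\sigma}\sqrt{\varOmega/(\gamma+dc_2)}\geq\tau$, and since this lower bound is uniform in $k$, Zeno behavior is ruled out.

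The only real subtlety, more of a consistency check than an obstacle, will be to confirm that the triggering threshold is actually crossed from below, so that the ``first hitting time'' interpretation of the next event is the correct one. This is immediate: $\|e\|$ is continuous in $t$, $\|e(t_k)\|=0$, and the right-hand side of \eqref{event_cond} is strictly positive (since $\varOmega>0$ by construction), so the equality in \eqref{event_cond} is first attained at some strictly later instant. No sharper estimate of $\|\dot{e}\|$ than $\sigma$ is required, which is precisely what produces the closed-form MIET claimed in the statement.
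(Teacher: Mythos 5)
Your proof is correct and rests on the same two ingredients as the paper's: a growth bound on $e$ coming from $\|\dot y\|\le\sigma$ together with $\dot{\bar y}=0$ between events, and the lower bound $\|e(t_{k+1})\|\ge\sqrt{\varOmega/(\gamma+dc_2)}$ obtained from $\eta\ge 0$ in the trigger rule \eqref{event_cond}. The one genuine difference is how the growth bound is executed, and it matters quantitatively. The paper differentiates $\|e\|^2$, obtains $\frac{d}{dt}\|e\|^2\le 2\sigma\|e\|$, and then majorizes $\|e\|$ on the right by its terminal value $\sqrt{\varOmega/(\gamma+dc_2)}$ before integrating; that coarse substitution is exactly what produces the factor $\frac{1}{2\sigma}$ in the stated $\tau$. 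You instead integrate $\|\dot e\|\le\sigma$ directly to get $\|e(t)\|\le\sigma(t-t_k)$, which yields the inter-event lower bound $\frac{1}{\sigma}\sqrt{\varOmega/(\gamma+dc_2)}=2\tau$. Since $2\tau\ge\tau>0$, your argument fully establishes the theorem's conclusion (a positive MIET and exclusion of Zeno behavior), and in fact shows that the paper's stated value is conservative by a factor of two --- neither derivation computes the exact MIET; both are lower bounds, yours being the sharper. Your closing observation that the threshold is necessarily crossed from below (continuity of $\|e\|$, $\|e(t_k)\|=0$, and a right-hand side bounded below by $\varOmega>0$) is a point the paper leaves implicit and is worth stating.
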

\begin{proof}
	Differentiating $\|e\|^2$ for $t \in [t_k, t_{k+1})$ we get $\frac{d}{dt}\|e\|^2 = 2 e^T \dot{e} = 2 e^T (\dot{\bar{y}} - \dot{y})$ $= -2 e^T C(Ax + Bu)$ $\leq 2 \|e\| \|CAx + CBu\|$ $\leq 2 \sigma \|e\|$. For MIET $(\tau)$, the event triggers when the right side of the equality in \eqref{event_cond} is minimum, which is achieved when $\eta = 0$ (since $\eta \geq 0 \forall t \geq 0$, see \eqref{aux}). The value of error at MIET $(\tau)$ is 
	\begin{eqnarray}\label{next_event}
		\|e(t_k + \tau)\| = \min_\eta\sqrt{\frac{dc_1\eta + \varOmega}{\gamma +d c_2}} = \sqrt{\frac{\varOmega}{\gamma +d c_2}}, 
	\end{eqnarray}
	and using \eqref{next_event} in the expression of $\frac{d}{dt} \|e\|^2$, we get
	\begin{equation}\label{event_derivative}
		\frac{d}{dt} \|e\|^2 \leq 2\sigma\sqrt{\frac{\varOmega}{\gamma +d c_2}}.
	\end{equation}
	Integrating \eqref{event_derivative} for $t \in [t_k, t_k + \tau)$, we get $\|e(t_k + \tau)\|^2 - \|e(t_k)\|^2 = 2\sigma (t_k + \tau - t_k)\sqrt{{\varOmega}/(\gamma +d c_2)}$. Since $\|e(t_k)\| = 0, \forall k \in \mathbb{Z}_+$ and substituting $\|e(t_k + \tau)\|$ from \eqref{next_event}, we get $\tau = ({1}/{2\sigma}) \sqrt{{\varOmega}/(\gamma +d c_2)}$.
\end{proof}

\section{Attack Detection, Mitigation and Closed-Loop System Stability} \label{attack_sec}
\subsection{Passivity-Based Attack Detection}
The proposed passivity-based attack detector relies on the verification of passivity inequality \eqref{passivity_inequality} for the measurable signals in Fig.~\ref{fig:attack_model}. Note that the agents' output $y$, controllers' output $u$, and observers' state $\hat{x}$ and the output $\hat{y} = C\hat{x}$ are measurable. Note that the original input signal $u$ is measurable and the corrupted input $u^c$, applied to the agent, is not measurable. However, the output $y$ is available only at events of cryptographic authentication. This suggests verification of \eqref{passivity_inequality} for the input-output pair $(u, \hat{y})$ across the observer (i.e., points E and F in Fig.~\ref{fig:attack_model}) with the quadratic storage function $S(\hat{x}) = (1/2)\hat{x}^T \hat{x}$ associated with the observer's states $\hat{x}$. It is straightforward that \eqref{passivity_inequality} holds for appropriately designed observer, across points E and F if there is no attack on the system \cite{qin2009observer}, while this might not be true in case of an actuator attack. We leverage this fact in detecting the presence of an attack $u^a$. We classify the attack signal as detectable and undetectable as follows:
\begin{definition} \label{def_detectable}
An attack $u^a$ which does not satisfy (resp., satisfy) passivity inequality \eqref{passivity_inequality} with respect to the points E and F in Fig.~\ref{fig:attack_model} with quadratic storage function $S(\hat{x}) = (1/2)\hat{x}^T \hat{x}$ is detectable (resp., undetectable).
\end{definition}

Note that it is sufficient to check \eqref{passivity_inequality} for the detection of an attack on the system. However, we also provide the following theorem to establish a connection between the system and network properties (in terms of the matrices $A, B, C, H$, and $\mathcal{L}$) under the attack signal $u_a$ on the system, however, it is not necessary for the purpose of implementation $-$ verifying only \eqref{passivity_inequality} is sufficient. 

\begin{thm} \label{passive_detect_thm}
	Let $\mathcal{M} \coloneqq (K(B\mathcal{L} - C^T \mathcal{L}^T) + H)C - A$ be a matrix of order $Nm \times Nm$. If there exists an attack signal $u^a$ such that
	\begin{equation} \label{passive_detect_eqn}
	\hat{x}^T \mathcal{M} \hat{x} < \hat{x}^T HC \int_{0}^{t_k} (Ax + B\hat{u}^n +B u^a) d\tau,
	\end{equation}
	where $t_k$ is the latest event time, then the passivity inequality \eqref{passivity_inequality} is not satisfied across points E and F. Consequently, an attack $u^a$ is detected in accordance with Definition~\ref{def_detectable}.
\end{thm}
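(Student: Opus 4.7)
The plan is to derive the passivity inequality across points E and F with the quadratic storage function $S(\hat{x}) = (1/2)\hat{x}^T \hat{x}$, and then show that the hypothesis \eqref{passive_detect_eqn} of the theorem is exactly the \emph{negation} of this inequality. Throughout, I would work in the pre-detection (normal) mode, where $\hat{u}^a \equiv \pmb{0}_{Np}$ and the applied control is $u = \hat{u}^n = -K \mathcal{L} \hat{y}$ from \eqref{control_matrix_observer}, so that the observer dynamics \eqref{observer} reduce to $\dot{\hat{x}} = A \hat{x} + B u + H(y - \hat{y} + e)$.

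First, I would compute both sides of $u^T \hat{y} \geq \dot{S}(\hat{x})$. On the supply side, using $\hat{y} = C \hat{x}$ and $u = -K \mathcal{L} \hat{y}$ gives $u^T \hat{y} = -K \hat{x}^T C^T \mathcal{L}^T C \hat{x}$. On the storage side, $\dot{S}(\hat{x}) = \hat{x}^T \dot{\hat{x}}$ yields three pieces: $\hat{x}^T A \hat{x}$, $\hat{x}^T B u = -K \hat{x}^T B \mathcal{L} C \hat{x}$, and $\hat{x}^T H (y - \hat{y} + e)$. Moving all quadratic-in-$\hat{x}$ terms to the left-hand side, including $-\hat{x}^T H \hat{y} = -\hat{x}^T H C \hat{x}$, the passivity inequality collapses to
\[
\hat{x}^T \bigl[ (K(B \mathcal{L} - C^T \mathcal{L}^T) + H) C - A \bigr] \hat{x} \;\geq\; \hat{x}^T H (y + e),
\]
where the bracketed matrix is precisely $\mathcal{M}$.

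Next, I would simplify the right-hand side. Since $e(t) = \bar{y} - y(t) = y(t_k) - y(t)$ for $t \in [t_k, t_{k+1})$, we have $y + e = y(t_k)$. Expressing $y(t_k) = C x(t_k) = C \int_0^{t_k} \dot{x}\, d\tau$ (absorbing $x(0)$ as a harmless constant offset), and invoking the plant equation \eqref{model_complete} together with the actuator attack model \eqref{attack_ip} and $u = \hat{u}^n$ gives $y(t_k) = C \int_0^{t_k} (A x + B \hat{u}^n + B u^a)\, d\tau$. Substituting back yields the passivity inequality in the precise form
\[
\hat{x}^T \mathcal{M} \hat{x} \;\geq\; \hat{x}^T H C \int_0^{t_k} (A x + B \hat{u}^n + B u^a)\, d\tau,
\]
whose strict negation is exactly \eqref{passive_detect_eqn}; invoking Definition~\ref{def_detectable} then completes the argument.

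The main obstacle, which is really just careful bookkeeping, is keeping track of the operating mode during the derivation: the detector is evaluated \emph{before} any switch, which is why I may set $\hat{u}^a = \pmb{0}_{Np}$ and $u = \hat{u}^n$; otherwise an extra $\hat{x}^T B \hat{u}^a$ cross-term would need to be grouped into the quadratic form on the left, modifying the definition of $\mathcal{M}$. A secondary subtlety is the initial-condition term $C x(0)$ that appears when integrating $\dot{x}$ to recover $y(t_k)$, which I absorb so that $y(t_k)$ is written as a pure integral of the plant dynamics over $[0, t_k]$, matching the form in \eqref{passive_detect_eqn}.
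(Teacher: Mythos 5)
Your proposal is correct and follows essentially the same route as the paper: expand $u^T\hat{y} \geq \dot S(\hat{x})$ along the normal-mode observer dynamics with $u=\hat{u}^n$, collect the quadratic terms into $\mathcal{M}$, use $y+e=\bar{y}=Cx(t_k)$, and rewrite $x(t_k)$ as the integral of the attacked plant dynamics so that \eqref{passive_detect_eqn} is the strict negation of the resulting inequality. The only (minor) difference is that you explicitly flag the $Cx(0)$ offset when writing $x(t_k)=\int_0^{t_k}\dot{x}\,d\tau$, a point the paper glosses over.
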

\begin{proof}
	We prove it by contradiction. Assume that \eqref{passivity_inequality} is satisfied across points E and F with $S(\hat{x}) = (1/2)\hat{x}^T \hat{x}$. This implies that $u^T \hat{y} \geq \dot{S} = \hat{x}^T \dot{\hat{x}}$. Since the system is operating in normal mode, we have $u = \hat{u}^n$ as $\delta = 0$ in \eqref{switching_control}. Substituting $\dot{\hat{x}}$ and $\hat{u}^n$ from \eqref{observer} and \eqref{control_matrix_observer} results in $-K \hat{x}^T C^T \mathcal{L}^T C \hat{x} \geq  \hat{x}^T (A\hat{x} + B\hat{u}^n + H (y - \hat{y} + e)) \implies -K \hat{x}^T C^T \mathcal{L}^T C \hat{x} \geq  \hat{x}^T (A\hat{x} - BK\mathcal{L}C\hat{x} + HC (x - \hat{x} + \bar{x} - x))$, where $\bar{x} = x(t_k)$. Rearranging, we get $\hat{x}^T ((K(B\mathcal{L} -C^T \mathcal{L}^T) + H)C - A)\hat{x} \geq \hat{x}^THC \bar{x}$. Replacing $(K(B\mathcal{L} - C^T \mathcal{L}^T) + H)C - A$ by $\mathcal{M}$ as defined in the statement of theorem, the preceding inequality becomes $\hat{x}^T \mathcal{M} \hat{x} \geq \hat{x}^T HC \bar{x}$. Now, substituting $\bar{x} = \int_{0}^{t_k} \dot{x}(\tau) d \tau$, where $\dot{x}(\tau)$ is given by \eqref{model_complete}, we get $\hat{x}^T \mathcal{M} \hat{x} \geq \hat{x}^T HC \int_{0}^{t_k} (Ax + B\hat{u}^n +B u^a) d\tau$. Now, it can be concluded that any $u^a$ which violates the preceding inequality (i.e., satisfies \eqref{passive_detect_eqn}) does not satisfy \eqref{passivity_inequality} across points E and F in Fig.~\ref{fig:attack_model}, and hence, are detectable. 
\end{proof}

\subsection{Attack Mitigation}
Once the attack is detected, the controller \eqref{switching_control} switches to the defense mode, i.e., $\delta = 1$ and $u = u^d$. Relying on the estimated attack signal $\hat{u}^a$ in \eqref{attack_estimate}, we propose the control in defense mode as
\begin{equation}\label{control_defense_mode}
u^d = \hat{u}^n - \hat{u}^a,
\end{equation}
where $\hat{u}^n$ is given by \eqref{control_matrix_observer}. As a result, the compromised control input to the actuators becomes
\begin{equation} \label{control_under_attack}
	u^c = u^d + u^a = \hat{u}^n - \hat{u}^a + u^a.
\end{equation}
We are now ready to prove Proposition~\ref{proposition}. 

\begin{proof}[\textbf{Proof of Proposition~\ref{proposition}}]
	Left multiplying \eqref{attack_estimate} with $B$ on both sides, yields
	\begin{equation}
		B \hat{u}^a = B B^\dagger (\dot{\hat{x}} - A\hat{x} - Bu^n + B\hat{u}^a(t - t_d)).
	\end{equation}
	From \eqref{state_error}, substituting $\hat{x} = x - \xi$ and $\dot{\hat{x}} = \dot{x} - \dot{\xi}$, we obtain $B \hat{u}^a = B B^\dagger(\dot{x} - \dot{\xi} - A(x - \xi) - B\hat{u}^n + B\hat{u}^a (t - t_d))$. Substituting for $\dot{x}$ and $u^c$ from \eqref{model_complete} and \eqref{control_under_attack}, respectively, and using the property $B B^\dagger B = B$ (as stated in notations), we get $B \hat{u}^a= B B^\dagger (Ax + B\hat{u}^n - B\hat{u}^a + Bu^a - Ax -B\hat{u}^n+ B\hat{u}^a (t - t_d) - \dot{\xi} + A\xi)$ $= Bu^a + B\hat{u}^a (t-t_d) - B\hat{u}^a + \nu$, where $\nu = B B^\dagger (-\dot{\xi} + A\xi)$. Rearranging, we have $B (\hat{u}^a - u^a) = - B(\hat{u}^a - \hat{u}^a (t - t_d)) + \nu \implies \|B (\hat{u}^a - u^a)\| \leq \|B(\hat{u}^a - \hat{u}^a (t - t_d))\| + \|\nu\|$. Using \eqref{lipsch}, we get $\|B (\hat{u}^a - u^a)\| \leq \psi t_d\|B\| + \|\nu\|$. According to Assumption~\ref{assum_obs} and Theorem~\ref{event_thm}, $\xi \to 0$ as $t \to t_a$, this implies that $\|\nu\| \to 0$ as $t \to t_a$. Now, it follows that there exists a constant $\varPhi = \psi t_d\|B\|> 0$ such that $\|B (\hat{u}^a - u^a)\| \leq \varPhi, \ \forall t \geq t_a$, proving our claim.
\end{proof}

\begin{remark}\label{attack_estimate_error_bound}
Multiplying by $\|B^\dagger\|$ on both sides of the inequality $\|B (\hat{u}^a - u^a)\| \leq \varPhi$, we have $\|B^\dagger\|\|B (\hat{u}^a - u^a)\| \leq \varPhi\|B^\dagger\| \implies \|B^\dagger B (\hat{u}^a - u^a)\| \leq \|B^\dagger\|\|B (\hat{u}^a - u^a)\| \leq \varPhi\|B^\dagger\|$. Since $B^\dagger B = I$ for the full column rank matrix $B$ (see {\bf (A2)}) and $\varPhi = \psi t_d\|B\|$, it follows from the preceding relation that $\|\hat{u}^a - u^a\| \leq \psi t_d, \ \forall t \geq t_a$.  Clearly, $\|\hat{u}^a - u^a\| \approx 0, \forall t \geq t_a$ for sufficiently small reaction time $t_d$.
\end{remark}

Note that there seems to be an inter-dependency among Proposition~\ref{proposition} and Theorem~\ref{event_thm}. However, under Assumption~\ref{assum_obs}, Theorem~\ref{event_thm} holds independently for $t < t_a$, as the attack estimator is ``OFF" (i.e., $\hat{u}^a = \pmb{0}_{Np}$) in the absence of an attack (i.e., $u^a=\pmb{0}_{Np}$), and can be considered as initialization to our problem. On the other hand, in the presence of an attack (i.e., $u^a \neq \pmb{0}_{Np}$), we make sure with the aid of Assumption~\ref{attack_assum} that when an attack starts at $t = t_a$, the state observer has a small error for accurate detection and estimation of the attack signal (similar assumption is also made in \cite{boem2017distributed}). 

\subsection{Closed-Loop System Stability} \label{stability_sec}
In the below theorem, we show that the difference among agents' outputs remain bounded during attack and the closed-loop system achieves practical consensus.  
\begin{thm} \label{stability_thm}
Consider the closed-loop system as shown in Fig.~\ref{fig:attack_model} with agent, observer and attack estimator models \eqref{model_complete}, \eqref{observer} and \eqref{attack_estimate}, respectively. Let the system be governed by the switching controller \eqref{switching_control} where $u^n$ and $u^d$ are defined in \eqref{control_matrix_observer} and \eqref{control_defense_mode}, respectively. Then, the system achieves (practical) output consensus, i.e., there exists a small constant $\epsilon_o > 0$ such that $\lim_{t \to \infty}\|y_i - y_j\| \leq \epsilon_o, \forall (i, j) \in \mathbb{E}$. Additionally, if the communication topology is balanced, then $\epsilon_o$ is given by $\epsilon_o = \sqrt{\bar{\omega}/K}$, where $\bar{\omega}$ depends on $\|\xi\|$ and $\|\hat{u}^a - u^a\|$. 
\end{thm}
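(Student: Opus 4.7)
The plan is to run a Lyapunov argument built on the composite quadratic storage function of the passive agents, after isolating the disagreement component of the output using the balanced-topology assumption. First, I would derive the closed-loop agent dynamics in the defense mode by substituting $u = u^d = \hat{u}^n - \hat{u}^a$ into $u^c = u + u^a$ and expanding $\hat{y} = y - C\xi$ in \eqref{control_matrix_observer}, obtaining $\dot{x} = Ax - BK\mathcal{L}y + BK\mathcal{L}C\xi + B(u^a - \hat{u}^a)$. Relative to the nominal attack-free loop, the attack thus injects two perturbations: one through the state estimation error $\xi$, whose norm is ultimately bounded by Theorem~\ref{event_thm}, and one through the attack estimation residual $u^a - \hat{u}^a$, whose norm is at most $\psi t_d$ by Remark~\ref{attack_estimate_error_bound}.

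Second, by Lemma~\ref{passive} each agent admits a quadratic storage function $V_i$ with $\dot{V}_i \leq y_i^T u_i^c$. Summing, $V(x) = \sum_i V_i(x_i)$ satisfies $\dot{V} \leq y^T u^c = -K y^T \mathcal{L}y + K y^T \mathcal{L} C\xi + y^T (u^a - \hat{u}^a)$. To extract a disagreement norm, write $y = y_\parallel + y_\perp$ with $y_\perp = (\Pi \otimes I_p) y$ and $\Pi = I_N - (1/N)\pmb{1}_N\pmb{1}_N^T$. The balanced topology forces $\mathcal{L}_s^T \pmb{1}_N = \pmb{0}_N$, so $\mathcal{L}^T y_\parallel = \pmb{0}_{Np}$, yielding $y^T \mathcal{L}y = y_\perp^T \mathcal{L} y_\perp \geq (\lambda_2/2)\|y_\perp\|^2$ and $y^T \mathcal{L} C\xi = y_\perp^T \mathcal{L} C\xi$, where $\lambda_2 > 0$ is the smallest positive eigenvalue of $\mathcal{L}_s + \mathcal{L}_s^T$.

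Third, absorbing the cross terms with Young's inequality, e.g.\ $K y_\perp^T \mathcal{L} C\xi \leq (K\lambda_2/4)\|y_\perp\|^2 + (K/\lambda_2)\|\mathcal{L} C\|^2 \|\xi\|^2$, and splitting $y^T(u^a - \hat{u}^a)$ similarly across the parallel/perpendicular components, the uniform bounds on $\|\xi\|$ and $\|u^a - \hat{u}^a\|$ from Theorem~\ref{event_thm} and Remark~\ref{attack_estimate_error_bound} lead to $\dot{V} \leq -(K\lambda_2/4)\|y_\perp\|^2 + \bar{\omega}$, with $\bar{\omega}$ aggregating the perturbation bounds. A standard ultimate-boundedness argument then gives $\limsup_{t \to \infty}\|y_\perp\|^2 \leq 4\bar{\omega}/(K\lambda_2)$, and since $\|y_i - y_j\| \leq \sqrt{2}\|y_\perp\|$ for each edge $(i,j) \in \mathbb{E}$, this delivers the claimed bound $\epsilon_o$ of the form $\sqrt{\bar{\omega}/K}$ once $\lambda_2$ and the numerical constants are folded into $\bar{\omega}$.

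The main obstacle will be disposing of $y^T(u^a - \hat{u}^a)$ cleanly, since $u^a - \hat{u}^a$ is not orthogonal to the consensus subspace. The projection $y_\perp^T(u^a - \hat{u}^a)$ is tamed by Young's inequality, but the surviving $y_\parallel^T(u^a - \hat{u}^a)$ piece only perturbs the drift of the average output and does not enter the relative-consensus estimate; this is precisely why the statement concerns \emph{practical} consensus in terms of edge differences $\|y_i - y_j\|$ rather than absolute asymptotic stability, and why the balanced-topology assumption is invoked to decouple $y_\parallel$ from the Laplacian quadratic form.
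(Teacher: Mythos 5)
Your proposal follows the same skeleton as the paper's proof: invoke Lemma~\ref{passive} to get a storage function with $\dot V \le y^T u^c$, substitute the defense-mode input $u^c = \hat u^n - \hat u^a + u^a$ with $\hat y = C(x-\xi)$, isolate the Laplacian quadratic form $-Ky^T\mathcal{L}y$, bound the residual terms via Theorem~\ref{event_thm} and Remark~\ref{attack_estimate_error_bound}, and read off $\epsilon_o=\sqrt{\bar\omega/K}$ from the balanced-graph edge-sum identity. Where you differ is in the bookkeeping of the perturbation: the paper simply defines $\omega = y^T(K\mathcal{L}C\xi-(\hat u^a-u^a))$ and asserts $|\omega|\le\bar\omega$, which tacitly presumes $y$ is bounded; you instead project onto the disagreement subspace via $\Pi=I_N-(1/N)\pmb{1}_N\pmb{1}_N^T$, lower-bound $y^T\mathcal{L}y$ by $(\lambda_2/2)\|y_\perp\|^2$, and absorb the $y_\perp$-dependent cross terms with Young's inequality. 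That refinement is genuinely cleaner for the disagreement component and makes the dependence of $\bar\omega$ on $\|\xi\|$ and $\psi t_d$ explicit. However, note that your version inherits (and in fact makes visible) the same unresolved issue as the paper's: the surviving term $y_\parallel^T(u^a-\hat u^a)$ still sits inside $\dot V$ for the \emph{single} composite storage function $V(x)$, and you cannot discard it merely because it ``does not enter the relative-consensus estimate'' --- an ultimate-boundedness conclusion for $y_\perp$ drawn from $\dot V$ requires either that this term be bounded (i.e., that $y_\parallel$ be bounded, which needs a separate argument about the drift of the output average under the bounded residual $\|u^a-\hat u^a\|\le\psi t_d$) or that the analysis be carried out on a Lyapunov function for the disagreement dynamics alone. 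The paper sweeps this under $|\omega|\le\bar\omega$; you flag it honestly but do not close it, so both arguments share the same gap rather than yours introducing a new one.
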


\begin{proof}
Under no attack condition, the convergence directly follows from \cite{chopra2012output}, as $u = \hat{u}^n$ (see \eqref{control_matrix_observer}). Our main goal here is to prove convergence in case when the controller operates in the defense mode under attack, i.e., $u = u^d$, given by \eqref{control_defense_mode}. Since the agents \eqref{model_complete} are passive, Lemma~\ref{passive} assures existence of a candidate Lyapunov function $V_s(x)$ such that its time-derivative along agent dynamics \eqref{model_complete} satisfies $\dot{V}_s = ({\partial V_s}/{\partial x})^T (Ax + Bu^c) = -W(x) + y^T u^c = -W(x) + y^T \hat{u}^n - y^T (\hat{u}^a - u^a)$, where $W(x)>0, \forall x \neq \pmb{0}_{Nn}$. Substituting for $\hat{u}^n$ from \eqref{control_matrix_observer} and $\hat{x} = x -\xi$, we obtain
\begin{align*}
\dot{V}_s & = -W - K y^T \mathcal{L} \hat{y} - y^T (\hat{u}^a - u^a)\\
& = -W - K y^T \mathcal{L} C \hat{x} - y^T (\hat{u}^a - u^a)\\
& = -W - K y^T \mathcal{L} C (x - \xi) - y^T (\hat{u}^a - u^a)\\
& = -W - K y^T \mathcal{L} y {+} K y^T \mathcal{L} C \xi {-} y^T (\hat{u}^a {-} u^a)\\
& = -W - K y^T \mathcal{L} y + \omega,
\end{align*}
where $\omega = y^T(K \mathcal{L} C \xi - (\hat{u}^a - u^a))$. If $\omega \equiv 0$, the trajectories converges to the set $\Gamma = \{x \in \mathbb{R}^{Nm} \ | \ \dot{V}_s \equiv 0\}$. The set $\Gamma$ is characterized by all the trajectories such that $\{W(x) \equiv 0, y \equiv \mu \pmb{1}_{Np}, \mu \in \mathbb{R}\}$. If $\omega \neq 0$, it can be written that $\dot{V}_s \leq -W - K y^T \mathcal{L} y + |\omega|$. Following Proposition~\ref{proposition} and Remark~\ref{attack_estimate_error_bound}, it can be concluded that $|\omega| \leq \bar{\omega}$ for some $\bar{\omega}$, which is a small positive constant for small values of $\|\xi\|$ and $\|\hat{u}^a {-} u^a\|$. Consequently, $\dot{V}_s \leq -W(x) - K y^T \mathcal{L} y + \bar{\omega}$, and the convergence follows in the sense of practical stability.

To prove the second statement, we analyze the time-derivative $\dot{V}_s$. Clearly, $\dot{V}_s \leq 0$, if $\bar{\omega} \leq W(x) + K y^T \mathcal{L} y$. Since the preceding inequality must hold even in the worst-case scenario where $W(x) \equiv 0$, the condition $K y^T \mathcal{L} y \geq \bar{\omega}$ must be satisfied. In this situation, as $\dot{V}_s \leq 0$ outside the region where $K y^T \mathcal{L} y \geq \bar{\omega}$, it follows from \cite[Lemmea~4.6]{khalil2002nonlinear} that all the solution trajectories fall within the region where $K y^T \mathcal{L} y \leq \bar{\omega}$, as $t \to \infty$. For balanced and strongly connected digraphs (i.e., $L = L^T$), the preceding inequality can be written as $\sum_{(i, j) \in \mathbb{E}} \|y_i - y_j\|^2 \leq \bar{\omega}/{K} \implies \|y_i - y_j\| \leq \sqrt{\bar{\omega}/{K}}$ for each $(i, j) \in \mathbb{E}$. As $\bar{\omega}$ relies on $\|\xi\|$ and $\|\hat{u}^a {-} u^a\|$, small values of  $\|\xi\|$ and the reaction time $t_d$ (see Remark~\ref{attack_estimate_error_bound}) will lead to small value of $\bar{\omega}$, and hence, $\epsilon_o$. Alternatively, by increasing $K$, $\epsilon_o$ can be made small.
\end{proof}

\begin{figure}[t]
	\centering
	\subfloat[][Network Topology]{\includegraphics[width = 0.17\textwidth]{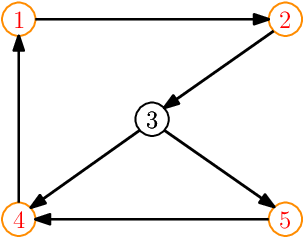} \label{fig:graph}}
	\subfloat[][Laplacian]{$\mathcal{L}_s =  \begin{bmatrix}
			1 &  0 &  0 & -1 &  0\\
			-1 &  1 &  0 &  0 &  0\\
			0 & -1 &  1 &  0 &  0\\
			0 &  0 & -1 &  2 & -1\\
			0 &  0 & -1 &  0 &  1
		\end{bmatrix}$
		\label{laplacian}}
	\caption{Communication Topology.}
	\label{fig:topology}
\end{figure}

\section{Simulation Example} \label{Sim}
Consider $N=5$ agents interacting according to a communication graph as shown in Fig.~\ref{fig:topology}. Agent's initial states are taken randomly in the interval $[-20, 25]$. The FDI attack on actuator is of the form $u^a_i = a_i \sin{\omega_i t}$, where $a_i, \omega_i$ are chosen randomly in the interval $[10, 20]$ and $[0, 10\pi]$, respectively, for $i=1,2,4,5$, and $a_3 = 0$, i.e., no attack on agent $3$. Note that the attack information is given only for the purpose of illustration and is unavailable to the controller. We consider that the attack remains active for $t \in [2, 5] s$. We discuss three case studies as follows:

\subsection{Passive Agents with Real Poles} \label{passive_real_poles}
Consider heterogeneous agents with the following governing matrices in \eqref{model_complete}:
\begin{equation*}
	A_i=i \begin{bmatrix} -6 & i & 0.5i\\ i & -9 & 0.4i\\ 0.5i & 0.4i & -9 \end{bmatrix}, ~ B_i=i \begin{bmatrix} 0\\ 0\\ 0.5 \end{bmatrix}, ~ C_i=\frac{1}{i} \begin{bmatrix} 2 \\ 0 \\ 0 \end{bmatrix}^T,
\end{equation*}
respectively, where $i=1,\ldots,5$. It can be easily verified that the agents are passive as the associated transfer function is positive real (please refer to positive real lemma \cite[Lemma~6.4]{khalil2002nonlinear}). Further, the preconditions \textbf{(A1)} and \textbf{(A2)} are trivially satisfied. 

\begin{figure}[t]
	\centering
	\begin{subfigure}{0.24\textwidth}
		\centering
		\includegraphics[width=\textwidth]{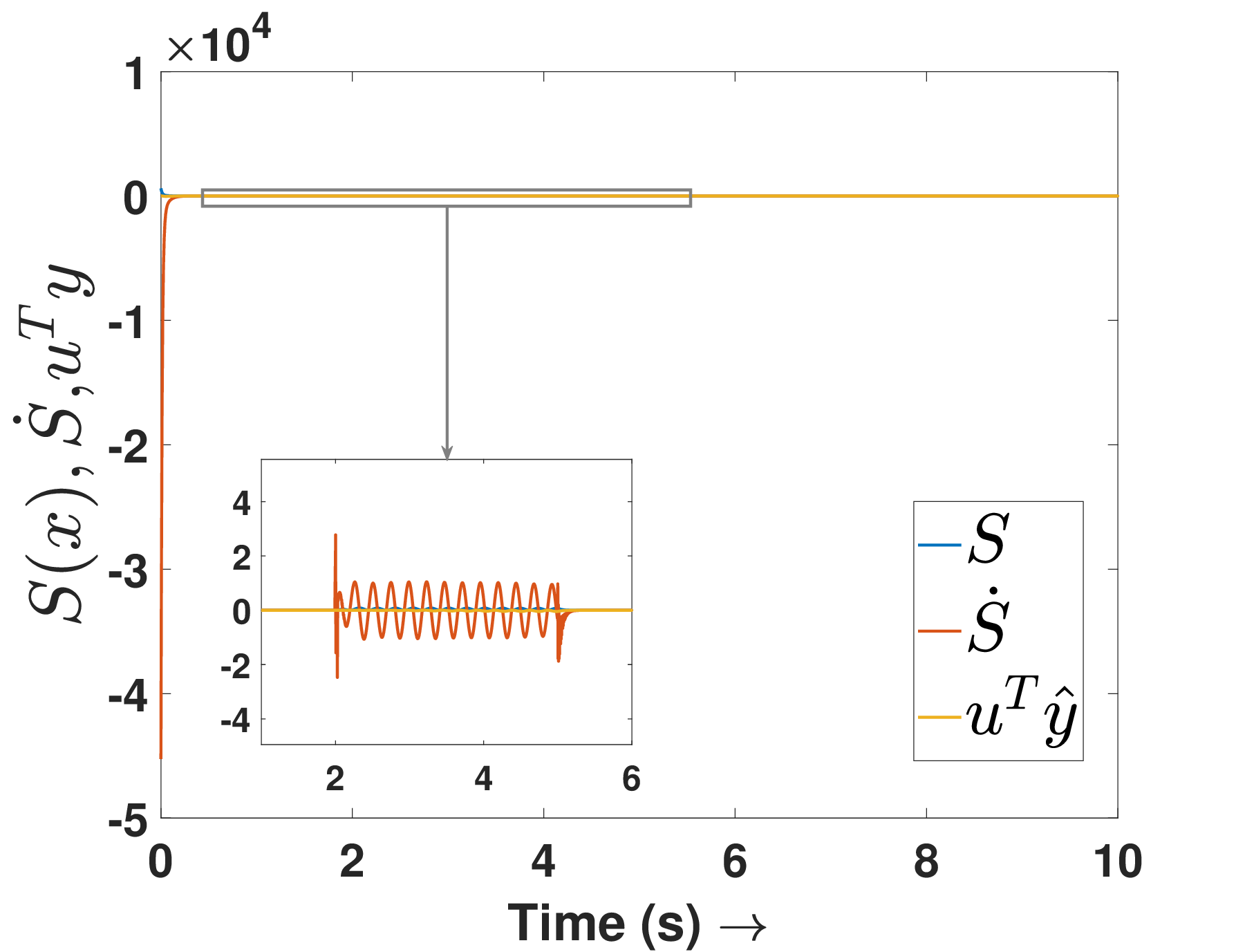}
		\caption{Passivity for agent 4}
		\label{fig:Pass}
	\end{subfigure}
	\hfill \hspace{-15pt}
	\begin{subfigure}{0.24\textwidth}
		\centering
		\includegraphics[width=\textwidth]{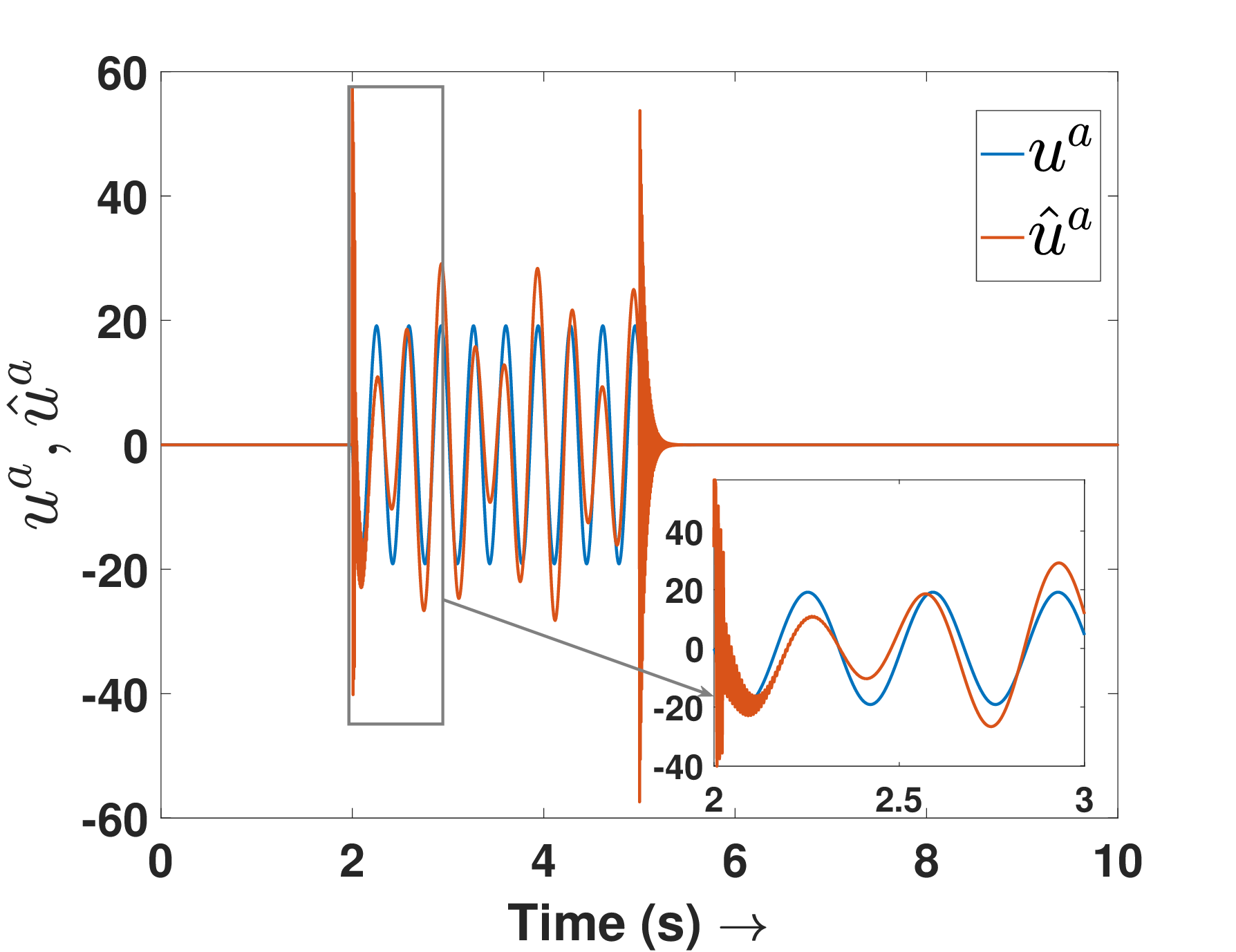}
		\caption{$u^a_4$, $\hat{u}^a_4$}
		\label{fig:Attack}
	\end{subfigure}
	\caption{Passivity inequality and attack signals (Case~A).}
\end{figure}

\begin{figure}[t]
	\centering
	\begin{subfigure}[b]{0.24\textwidth}
		\centering
		\includegraphics[width=\textwidth]{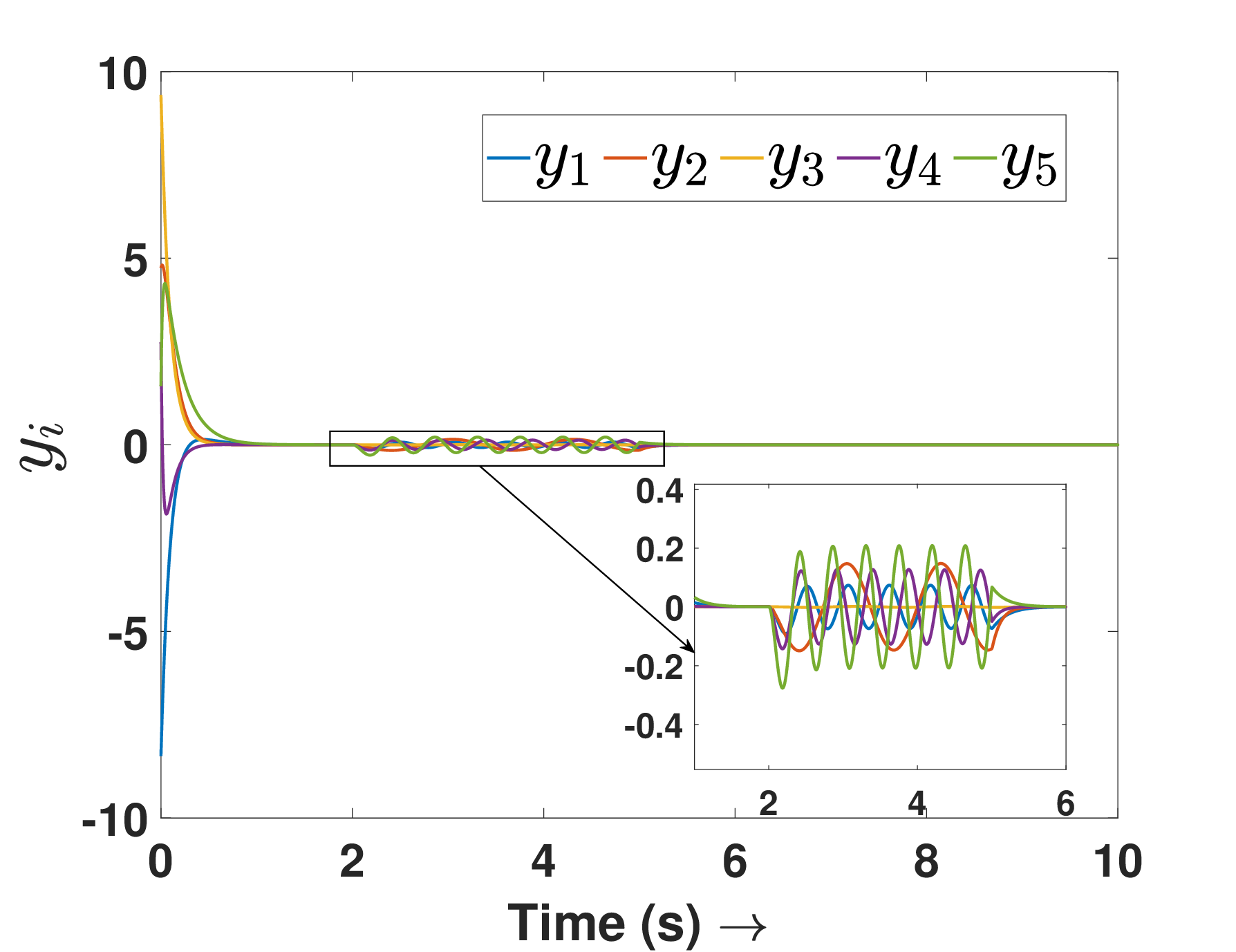}
		\caption{$y_i, \forall i$}
		\label{fig:Output}
	\end{subfigure}
	\hfill \hspace{-15pt}
	\begin{subfigure}[b]{0.24\textwidth}
		\centering
		\includegraphics[width=\textwidth]{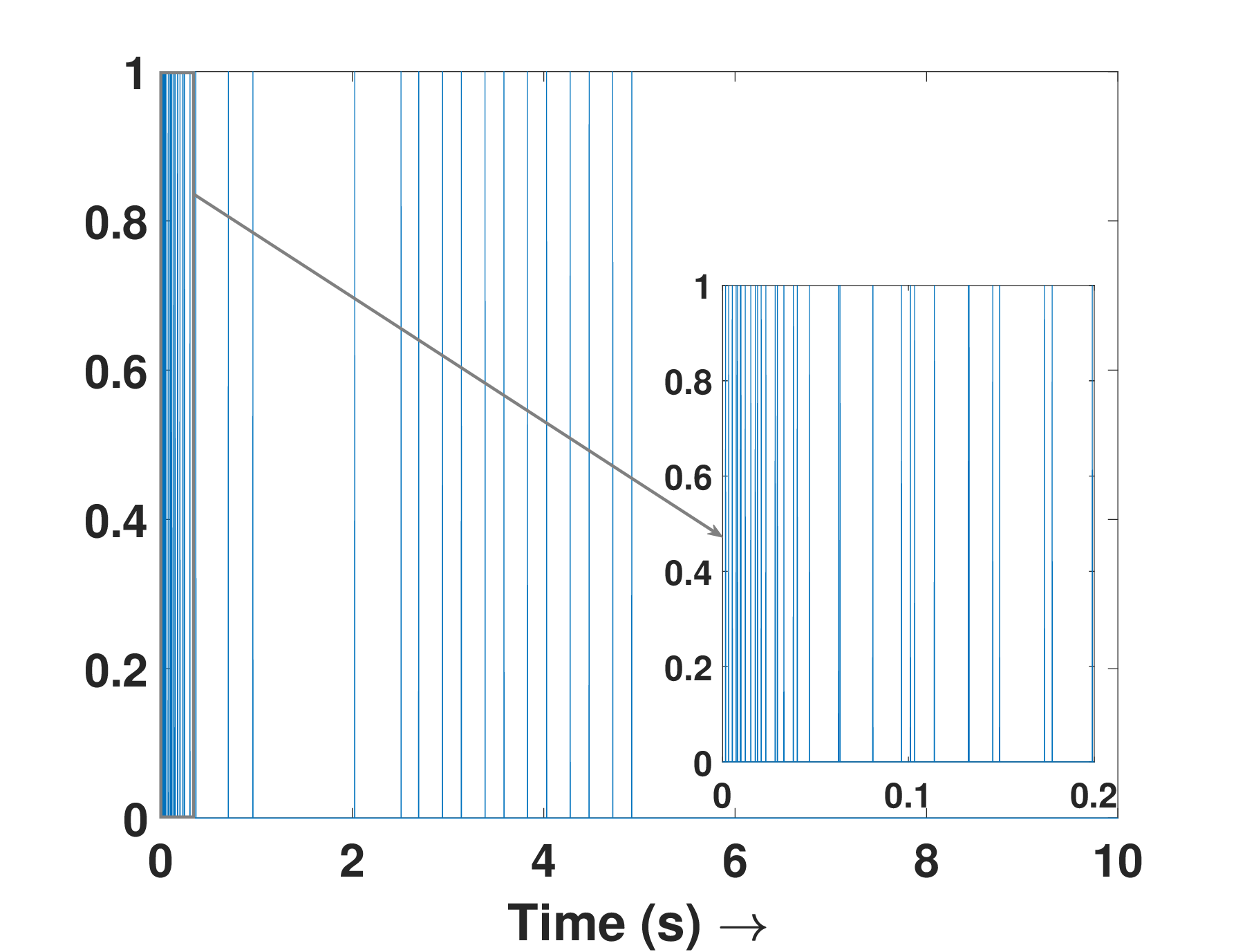}
		\caption{Events}
		\label{fig:Events}
	\end{subfigure}
	\caption{Agents' output and events for observer (Case~A).}
\end{figure}

\begin{figure}[t]
	\centering
	\begin{subfigure}{0.24\textwidth}
		\centering
		\includegraphics[width=\textwidth]{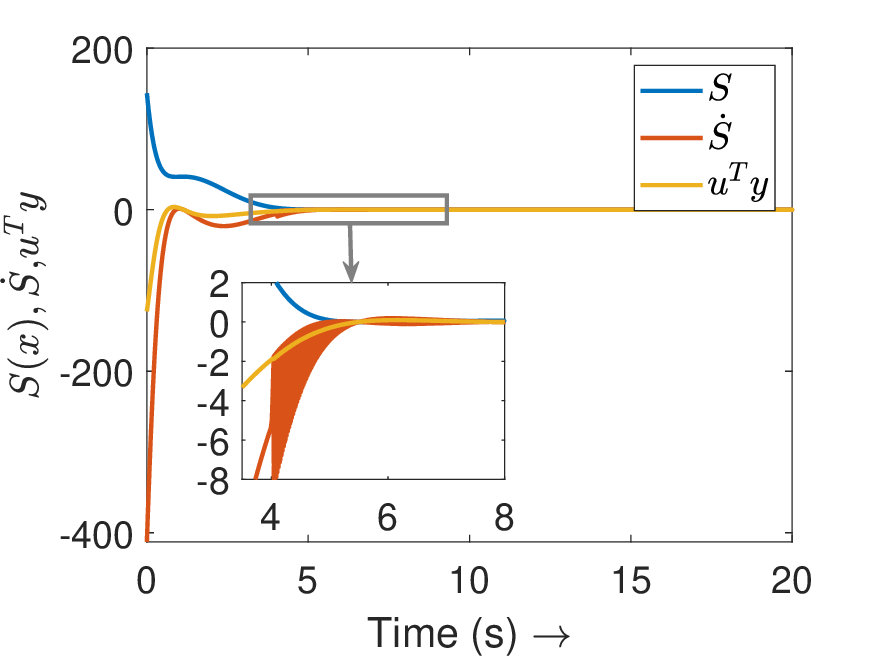}
		\caption{Passivity for agent 2}
		\label{fig:POPass}
	\end{subfigure}
	\hfill \hspace{-15pt}
	\begin{subfigure}{0.24\textwidth}
		\centering
		\includegraphics[width=\textwidth]{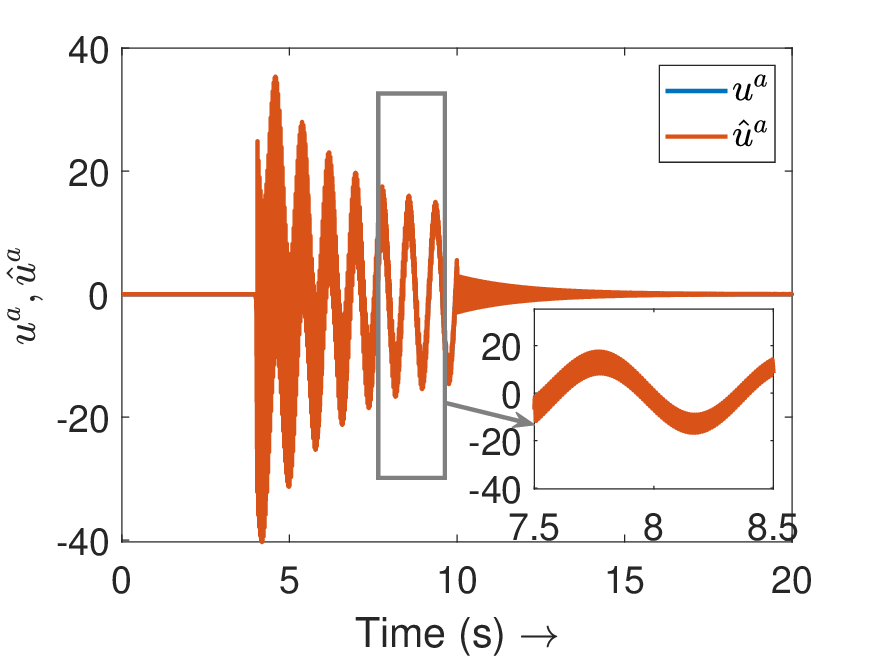}
		\caption{$u^a_2$, $\hat{u}^a_2$}
		\label{fig:POAttack}
	\end{subfigure}
	\caption{Passivity inequality and attack signals (Case~B).}
	\label{fig:PO}
\end{figure}

\begin{figure}[t]
	\centering
	\begin{subfigure}[b]{0.24\textwidth}
		\centering
		\includegraphics[width=\textwidth]{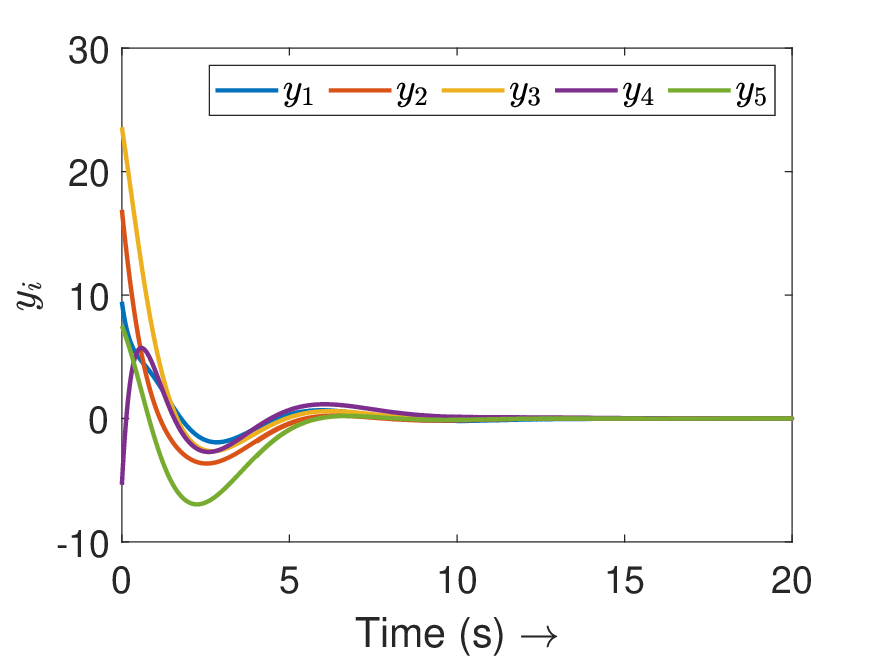}
		\caption{$y_{i}, \forall i$}
		\label{fig:POx2}
	\end{subfigure}
	\hfill \hspace{-15pt}
	\begin{subfigure}[b]{0.24\textwidth}
		\centering
		\includegraphics[width=\textwidth]{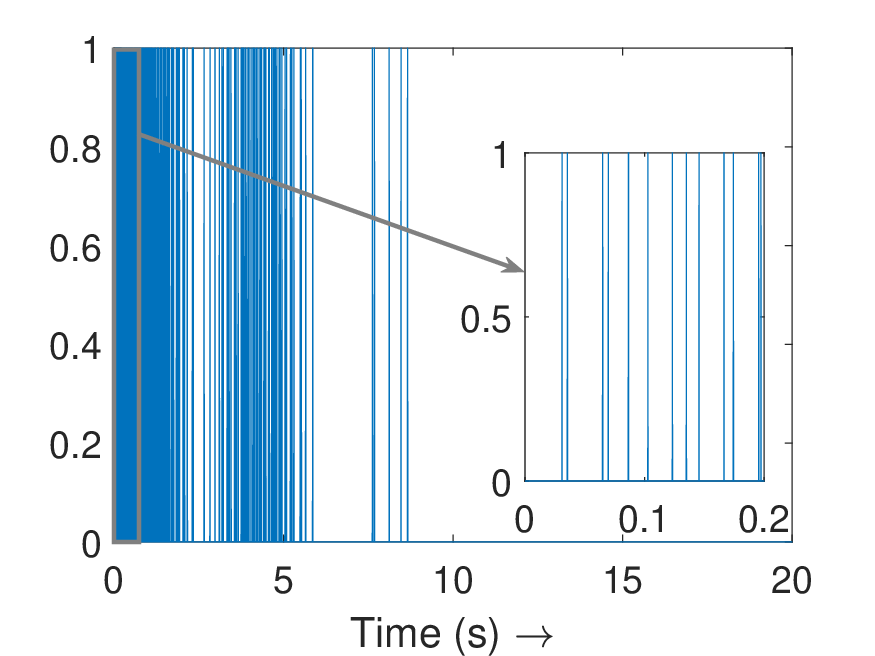}
		\caption{Events for Observer}
		\label{fig:POEvents}
	\end{subfigure}
	\caption{Oscillator's output and events for observer (Case~B).}
	\label{fig:POx}
\end{figure}

\begin{figure}[t]
	\centering
	\begin{subfigure}{0.24\textwidth}
		\centering
		\includegraphics[width=\textwidth]{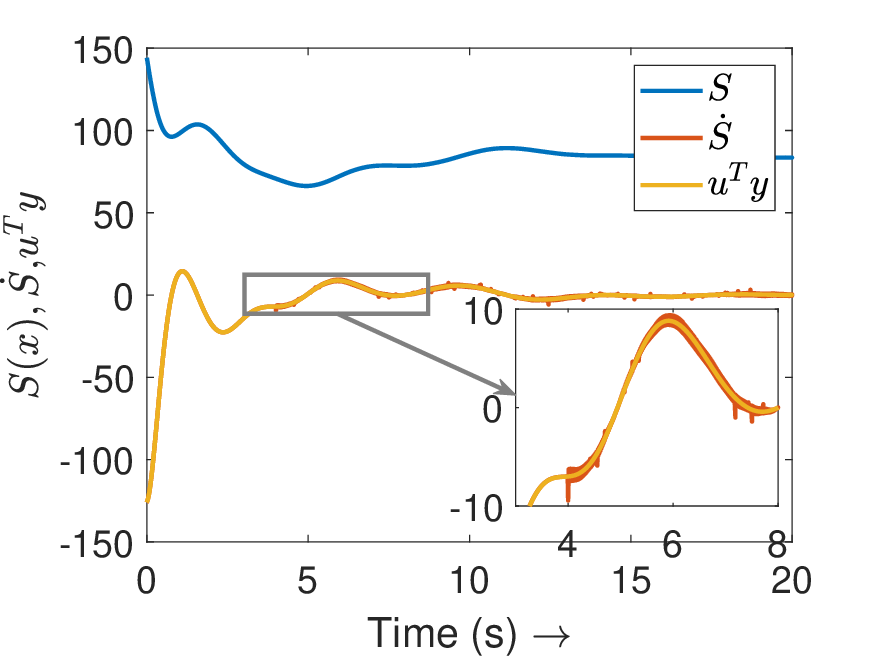}
		\caption{Passivity for agent 2}
		\label{fig:PO2Pass}
	\end{subfigure}
	\hfill \hspace{-15pt}
	\begin{subfigure}{0.24\textwidth}
		\centering
		\includegraphics[width=\textwidth]{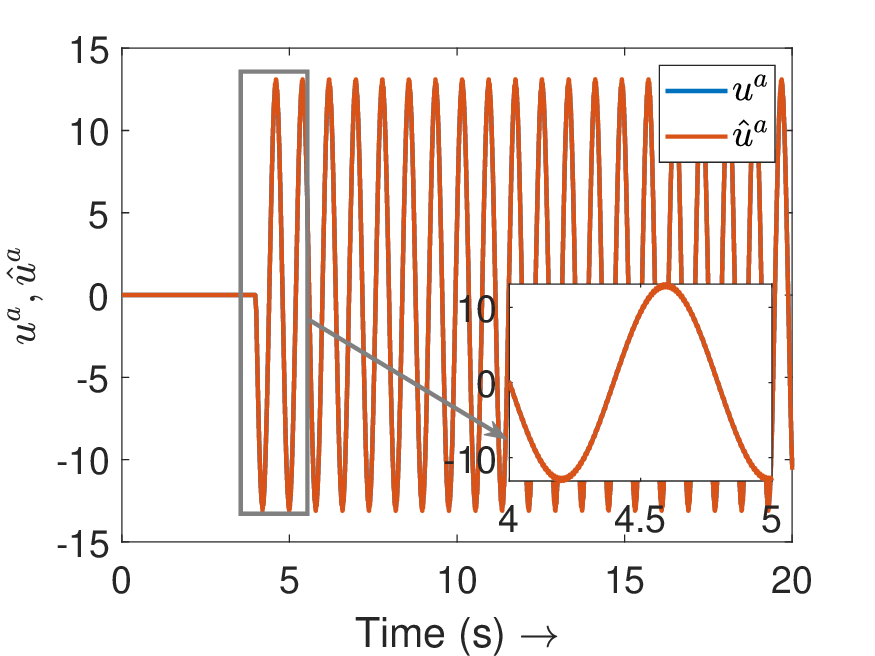}
		\caption{$u^a_2$, $\hat{u}^a_2$}
		\label{fig:PO2Attack}
	\end{subfigure}
	\caption{Passivity inequality and attack signals (Case~C).}
	\label{fig:PO2}
\end{figure}

\begin{figure}[t]
	\centering
	\begin{subfigure}[b]{0.24\textwidth}
		\centering
		\includegraphics[width=\textwidth]{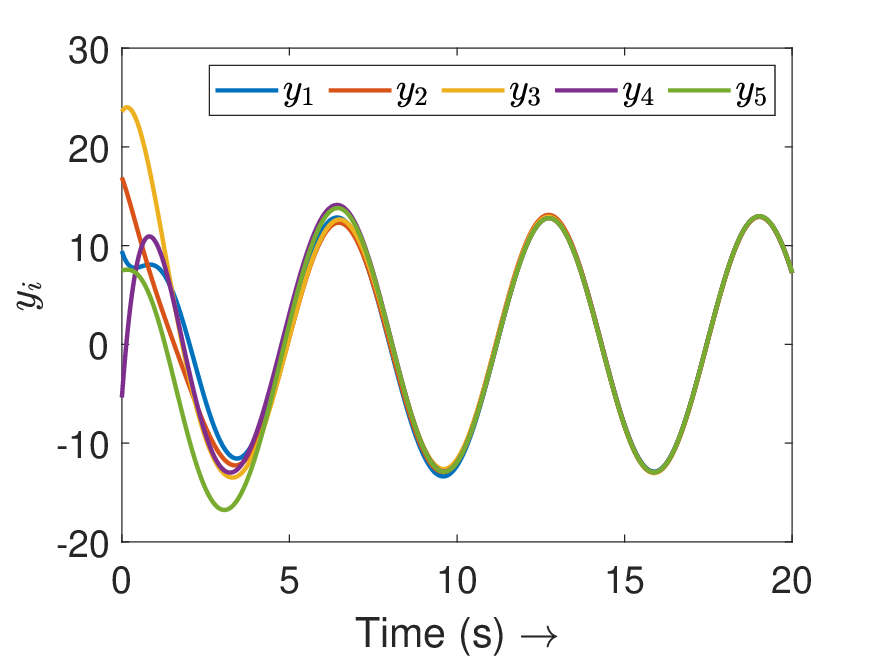}
		\caption{$y_{i}, \forall i$}
		\label{fig:PO2x2}
	\end{subfigure}
	\hfill \hspace{-15pt}
	\begin{subfigure}[b]{0.24\textwidth}
		\centering
		\includegraphics[width=\textwidth]{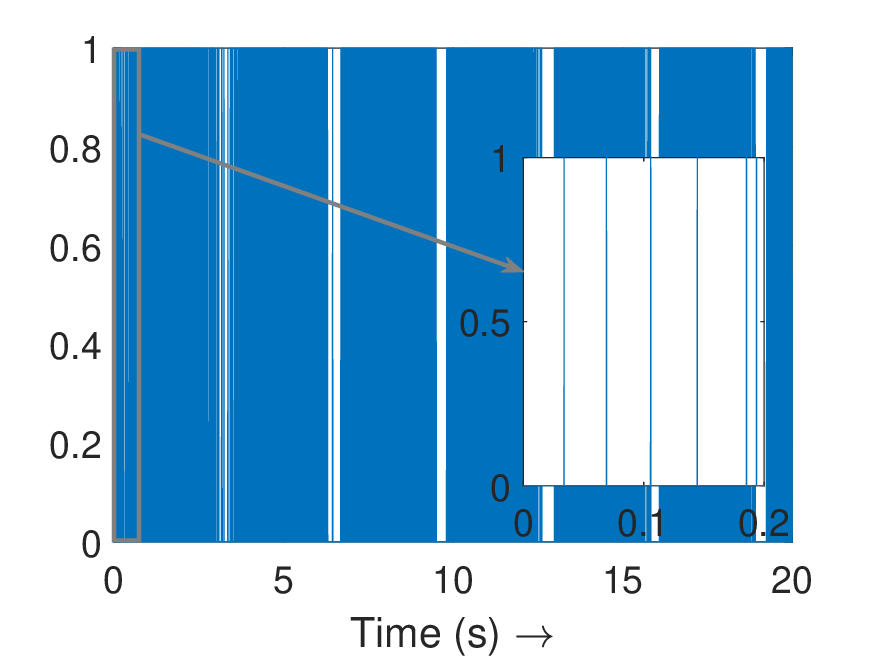}
		\caption{Events for Observer}
		\label{fig:PO2Events}
	\end{subfigure}
	\caption{Oscillator's output and events for observer (Case~C).}
	\label{fig:PO2x}
\end{figure}

Fig.~\ref{fig:Pass} depicts the plots for storage function $S(\hat{x})$, change in the stored energy $\dot{S}(\hat{x})$, and the supplied energy $u^T \hat{y}$ for agent 4. It can be observed that the passivity is lost at $t = 2$s as soon as the attack begins. Consequently, the attack is detected and the controller switches to defense mode. Fig.~\ref{fig:Attack} shows the attack estimation and the actual attack signals $-$ it is clearly visible that their difference is finite. The plots for other agents are similar and omitted for brevity. Fig.~\ref{fig:Output} shows that the output (of all the agents) remains bounded even under attack, as established in Theorem~\ref{stability_thm}. Finally, Fig.~\ref{fig:Events} depicts the events generated for the observer and shows the absence of Zeno behavior. Because of the fast-moving system states, there is a higher density of events initially and during the attack period, and a lower density when the states do not change.

\subsection{Passive Agents with Complex Poles} \label{passive_complex_poles}
Consider the network of homogeneous harmonic oscillators \cite{xia2016synchronization}, characterized by the following matrices in \eqref{model_complete}:
\begin{equation*}
	A_i = \begin{bmatrix}
		0 & 1\\
		-1 & 1
	\end{bmatrix}, \quad
	B_i = \begin{bmatrix}
		0\\
		1
	\end{bmatrix}, \quad
	C_i = \begin{bmatrix}
		0 & 1
	\end{bmatrix}, \ \forall i,
\end{equation*}
is a passive system according to the positive real lemma \cite[Section 6.3]{khalil2002nonlinear} and satisfy the preconditions \textbf{(A1)} and \textbf{(A2)}. Figs.~\ref{fig:PO} and \ref{fig:POx} show the simulation results for this case where conclusions, similar to the previous case, can be drawn. It is observed that the presence of complex poles degrades the attack estimation performance of the system due to the presence of oscillations in the system output.

\subsection{Passive Agents with Imaginary Poles} \label{passive_imaginary_poles}
We consider the network of homogeneous harmonic oscillators \cite{xia2016synchronization}, having the following matrices:
\begin{equation*}
	A_i = \begin{bmatrix}
		0 & 1\\
		-1 & 0
	\end{bmatrix}, \quad
	B_i = \begin{bmatrix}
		0\\
		1
	\end{bmatrix}, \quad
	C_i = \begin{bmatrix}
		0 & 1
	\end{bmatrix}, \forall i.
\end{equation*}
Again, the agents are passive and satisfy the preconditions \textbf{(A1)} and \textbf{(A2)}. Figs.~\ref{fig:PO2} and \ref{fig:PO2x} show the simulation results for this case. It can be noted that i) the presence of imaginary poles degrades the attack estimation performance of the system, and ii) the number of events is relatively higher, as compared to earlier cases, which is expected due to the oscillating nature of output \cite{nowzari2019event}. 

\section{Conclusions} \label{Conclusion}
The problem of output consensus for networked linear passive agents under FDI actuator attacks is examined in this paper. To reduce the computation and communication overhead, an event-triggered observer-based switching controller was proposed in conjunction with cryptographic authentication for estimating the non-measurable states. It was shown that the proposed event condition does not result in Zeno behavior and has a positive MIET. Relying on the measurements from the observer, a passivity-based attack detection approach was presented. It was shown that the error between the estimated and the actual attack signal is norm-bounded during the controller's functioning in the defense mode. Finally, the system was shown to achieve practical output consensus.

\begin{figure}[t]
	\centering
	\begin{subfigure}[b]{0.24\textwidth}
		\centering
		\includegraphics[width=1\textwidth]{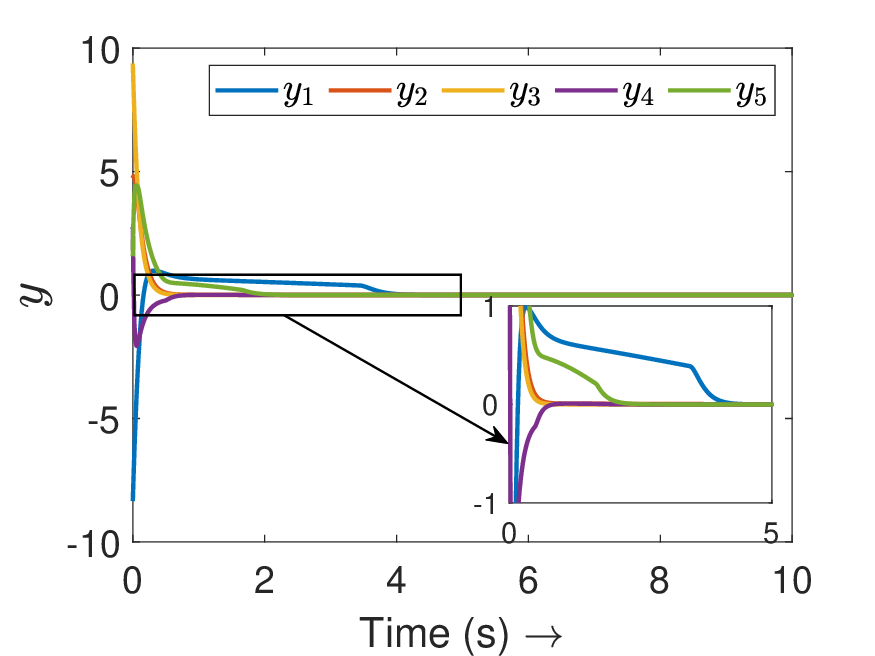}
		\caption{$y_{i}, \forall i$}
		\label{fig:Passive_consensus}
	\end{subfigure}
	\hfill \hspace{-15pt}
	\begin{subfigure}[b]{0.24\textwidth}
		\centering
		\includegraphics[width=1\textwidth]{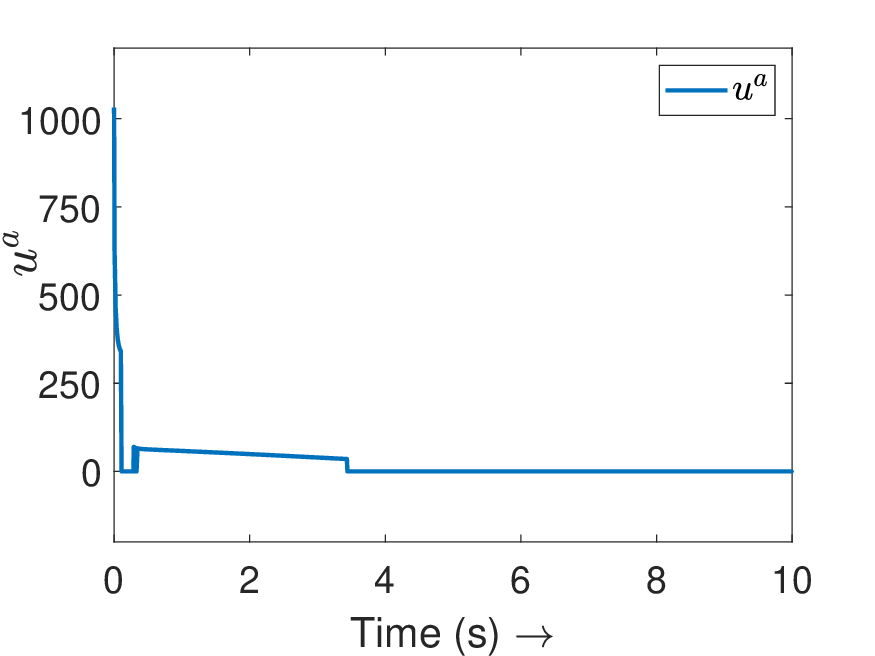}
		\caption{$u^a_2$}
		\label{fig:Passive_attack}
	\end{subfigure}
	\caption{Agent 2 under undetectable attack for $t \in [0,10]$.}
	\label{fig:Attackud}
\end{figure}

As discussed in Definition~\ref{def_detectable}, only the attacks that violate the passivity condition \eqref{passivity_inequality} are detected and mitigated in this paper. Since undetectable attacks do not cause the system to lose passivity (Definition~\ref{def_detectable}), they are not of concern as the system's consensus properties are unaffected by such attacks. We have simulated a result in this direction as shown in Fig.~\ref{fig:Attackud} for the agents in Subsection~\ref{passive_real_poles}, which supports our claim. This happens because the total energy of passive agents approaches zero towards consensus for undetectable attacks, forcing the attack signal to eventually converge to zero to remain undetectable (see Fig.~\ref{fig:Passive_attack}). This shows that the undetectable attacks are decrescent with time and asymptotically converge to zero. 

It would be interesting to also consider the attack on the global communication network and generalize the problem for more complex systems. 


\bibliographystyle{plain}
\bibliography{References}

\end{document}